\newtheorem{remark}{Remark}
\newcommand{\setN}{\mathbb{N}}
\newcommand{\setZ}{\mathbb{Z}}
\newcommand{\eqby}[1]{\stackrel{\textrm{{\normalfont\tiny{#1}}}}{=}}
\newcommand{\eqdef}{\eqby{def}}
\newcommand{\norm}[1]{{\mathopen{\|}#1\mathclose{\|}}}
\begin{document}

\title{State Complexity of Protocols With Leaders} 


\author{Jérôme Leroux}
\email{jerome.leroux@labri.fr}
\affiliation{%
  \institution{LaBRI, CNRS, Univ. Bordeaux}
  \city{Talence}         
  \country{France}   
}


\begin{abstract}
  Population protocols are a model of computation in which an arbitrary number of anonymous finite-memory agents are interacting in order to decide by stable consensus a predicate. In this paper, we focus on the counting predicates that asks, given an initial configuration, whether the number of agents in some initial state $i$ is at least $n$. In 2018, Blondin, Esparza, and Jaax shown that with a fix number of leaders and interaction-width, there exists infinitely many $n$ for which the counting predicate is stably computable by a protocol with at most $O(\log\log(n))$ states. We provide in this paper a matching lower-bound (up to a square root) that improves the inverse-Ackermannian lower-bound presented at PODC in 2021. 
\end{abstract}

\maketitle

\newcommand{\used}[1]{\operatorname{used}(#1)}
\newcommand{\vr}[1]{#1}

\section{Introduction}
Population protocols were introduced by Angluin, Aspnes, Diamadi, Fischer, and Peralta in \cite{DBLP:conf/podc/AngluinADFP04,DBLP:journals/dc/AngluinADFP06} to study the computational power of networks of resource-limited mobile agents. In this model, each agent has a state in a finite set of states. When agents interact, their states are updated accordingly to a finite interaction table. This table corresponds intuitively to a conservative Petri net (the Petri net is conservative since the number of agents is preserved by each transition) where each line of the interaction table is matched by a transition of the Petri net. In this model, an agent may accept or reject depending only on it own state. A population protocol is said to be stably computing a predicate, if for any initial configurations, eventually and forever, under some natural fairness conditions, either all agents accept or all agents reject. Moreover, this outcome should only depend on the initial configuration and not on the way interactions are performed.

Deciding if a protocol stably computes some unknown predicate is a problem called the \emph{well-specification problem}. This problem was proved to be decidable in~\cite{DBLP:conf/concur/EsparzaGLM15,DBLP:journals/acta/EsparzaGLM17} by observing that well-specification problem is equivalent to the reachability problem for Petri nets up to elementary reductions. Since this last problem was recently proved to be Ackermannian-complete~\cite{DBLP:journals/corr/abs-2104-12695,DBLP:journals/corr/abs-2104-13866}, it means that deciding the well-specification problem is Ackermannian-complete. Intuitively, population protocols maybe intrinsically very complicated.

Despite this Ackermannian complexity result, in~\cite{DBLP:journals/dc/AngluinAER07}, Angluin, Aspnes, Eisenstat, and Ruppert have shown that predicates stably computable by population protocols cannot be more complicated than the one definable in the Presburger arithmetic. Combined with \cite{DBLP:conf/podc/AngluinADFP04,DBLP:journals/dc/AngluinADFP06}, it follows that predicates stably computable by population protocols are exactly the predicates definable in the Presburger arithmetic.

Since deciding if a population protocol is stably computing some Presburger predicate is Ackermannian-complete, a natural question is the conciseness of population protocols. Intuitively, is it possible to define a population protocol computing predicates that are very complex compared to the number of states of the protocol ? This problem is related to the so-called \emph{state complexity} of a Presburger predicate intuitively defined as the minimal number of states of a population protocols deciding it.

State complexity upper-bounds are obtained thanks to algorithms computing from predicates protocols stably computing it with a number of states as small as possible. In~\cite{DBLP:conf/stacs/BlondinEGHJ20}, by revisiting the construction of population protocols deciding Presburger predicates, some improvement on state complexity upper-bounds was derived. On the other side, state complexity lower-bounds is also a difficult task since such a bound requires to prove that there is no way to stably compute a predicate with a given amount of states. In this context, focusing on the state complexity of simple Presburger predicates is a natural question. The simplest non trivial Presburger predicates are clearly the counting predicates that corresponds to the set of configurations such that the number of agents in a given state is larger than or equal to some positive number $n$. In 2018, Blondin, Esparza, and Jaax shown in~\cite{DBLP:conf/stacs/BlondinEJ18} that with a fix number of leaders and interaction-width (the number of agents that can interact at each interaction step), there exists infinitely many $n$ for which the counting predicate is stably computable by a protocol with at most $O(\log\log(n))$ states.

This state complexity upper-bound was recently completed by a state complexity lower-bound in~\cite{DBLP:conf/podc/CzernerE21}. In that paper, Czerner and Esparza shown that the number of states of a population protocol deciding a counting predicate with a bounded number of leaders and a bounded interaction-width is at least $\Omega(A^{-1}(n))$ where $A$ is some Ackermannian function, leaving a gap between the $O(\log\log(n))$ upper-bound and the $\Omega(A^{-1}(n))$ lower-bound.

\medskip

\textbf{Main result}.
In this paper, we follow the model of protocols introduced by Dana Angluin, James Aspnes, and David Eisenstat in 2006 that allows agents creations and destructions~\cite{DBLP:conf/podc/AngluinAE06}. We slightly extends that model to allow leaders. We close the previously mentioned state complexity gap by proving that for any $h<\frac{1}{2}$, and under a fix number of leaders and a bounded number of interaction-width, any protocol stably computing a counting predicate requires at least $\Omega((\log\log(n))^h)$ states.

\medskip

\textbf{Outline}.
In Section~\ref{sec:protocol} we recall some basic definitions and results about protocols. In Section~\ref{sec:interaction}, we introduce the notion of communication-width of a protocol and show that protocol with finite communication-width are naturally related to the model of Petri nets. In Section~\ref{sec:complexity} we introduce the state complexity problem and show that counting the number of states of protocols without taking into account the number of leaders or the communication-width is not relevant. In Section~\ref{sec:stab} we introduce the notions of stabilized configurations and show that those configurations are characterized by their small values. Those results are obtained thanks to Rackoff's techniques originally introduced in the context of the coverability problem for Petri nets. Section~\ref{sec:bot} contains the central technical lemma. It is a lemma about Petri nets that intuitively shows that from any initial configuration we can reach with short executions kind of bottom configurations. Section~\ref{sec:PNS} recalls the model of Petri net with control-states and provides a result on small cycles satisfying some properties. This last result is obtained by introducing a linear system and by applying Pottier's techniques~\cite{Pottier:1991:MSL:647192.720494} in order to obtain small solutions for that linear system. Results of the previous sections are combined in Section~\ref{sec:main} to obtain our state complexity lower-bound. Some related open problems and future work are presented in Section~\ref{sec:conc}.

\section{Protocols}\label{sec:protocol}
In this section, we introduce our model of protocols that slightly differs from the one introduced in~\cite{DBLP:conf/podc/AngluinAE06} by allowing leaders. We extend the definition of stably computing a predicate with such a protocol. This definition is a straight-forward extension of the one given in~\cite{DBLP:conf/podc/AngluinAE06} that generalizes several definitions of \emph{fair computations} in particular in the context of unconservative protocols.

\medskip

\newcommand{\dom}[1]{\operatorname{dom}(#1)}

Let $P$ be a finite set of elements called \emph{states}. A \emph{$P$-configuration} (or just a \emph{configuration} if $P$ is clear from the context) is a mapping in $\setN^P$. Given a configuration $\rho$, the number $|\rho|\eqdef\sum_{p\in P}\rho(p)$ is called the \emph{number of agents} in $\rho$. Let $Q$ be another finite set of states. We associate with a $P$-configuration $\rho$ the $Q$-configuration $\rho|_Q$ defined for every $q\in Q$ by $\rho|_Q(q)=\rho(q)$ if $q\in P$ and zero otherwise. Notice that $Q$ is not necessarily a subset of $P$. Given $p\in P$ we simply denote by $p|_P$ (or just $p$ when $P$ is clear from the context) the mapping in $\setN^P$ that maps $p$ on $1$ and the other states on zero. The sum $\alpha+\beta$ of two configurations $\alpha,\beta$ and the product $n.\rho$ where $n\in \setN$ and $\rho$ is a configuration are defined component-wise as expected.

\medskip

Let $R$ be a binary relation on $P$-configurations for some finite set of states $P$. We say that $R$ is \emph{additive} if $(\alpha,\beta)\in R$ implies $(\alpha+\rho,\beta+\rho)\in R$ for every configuration $\alpha,\rho,\beta$. As usual, $R$ is called a \emph{preorder} if it is \emph{reflexive} and \emph{transitive}. We also say that $R$ is \emph{conservative} if $|\alpha|=|\beta|$ for every $(\alpha,\beta)\in R$.

\medskip

A \emph{protocol} is a tuple $(P,\xrightarrow{*}, \rho_L,I,\gamma)$ where $P$ is a finite set of \emph{states}, $\xrightarrow{*}$ is an additive preorder on the $P$-configurations,  $\rho_L$ is a $P$-configuration called the \emph{configuration of leaders}, $I\subseteq P$ is the set of \emph{initial states}, and $\gamma:P\rightarrow \{0,\star,1\}$ is the \emph{output function}. The value $|\rho_L|$ is called the number of \emph{leaders}. A protocol is said to be \emph{leaderless} when this number is zero. A configuration of the form $\rho_L+\rho|_P$ with $\rho\in\setN^I$ is called an \emph{initial configuration}. A protocol is said to be \emph{conservative} if $\xrightarrow{*}$ is conservative. The function $\gamma$ is extended on any $\rho\in \setN^P$ by:
$$\gamma(\rho)=\{j\in \{0,\star,1\} \mid \exists p\in P ~\rho(p)>0~\wedge~\gamma(p)=j \}$$

\medskip

For $j\in\{0,1\}$, we introduce the following sets $S_j$ called the $j$-output stable configurations. Notice that $S_0$ and $S_1$ are not defined exactly the same way in order to manage the zero configuration. With our definition we interpret the output of the zero configuration as $0$. Notice that the definition of $0$-output stable configurations introduced in~\cite{DBLP:conf/podc/AngluinAE06} does not care about the zero configuration since protocols are semantically restricted to non zero configurations. We do not introduce a set $S_\star$ since intuitively the element $\star$ in the image of $\gamma$ is interpreted as an undetermined output.  
\begin{align*}
  S_0&\eqdef\{\alpha\in \setN^P \mid \forall \beta~ \alpha\xrightarrow{*}\beta \Rightarrow \gamma(\beta)\subseteq \{0\}\}\\
  S_1&\eqdef\{\alpha\in \setN^P \mid \forall \beta~ \alpha\xrightarrow{*}\beta \Rightarrow \gamma(\beta)= \{1\}\}
\end{align*}
  
\medskip

A \emph{predicate} is a mapping $\phi:\setN^I\rightarrow\{0,1\}$. We say that a protocol stably computes $\phi$ if for every $\rho\in\setN^I$ and for every $\alpha\in\setN^P$ such that $\rho_L+\rho|_P\xrightarrow{*} \alpha$, there exists $\beta\in S_{\phi(\rho)}$ such that $\alpha\xrightarrow{*}\beta$. A predicate $\phi$ is \emph{stably computable} if there exists a protocol that stably computes it.

\begin{remark}
  In~\cite{DBLP:conf/podc/AngluinAE06}, predicates that are stably computable by leaderless protocols restricted to functions $\gamma$ such that $\gamma(P)\subseteq \{0,1\}$ are proved to be exactly the predicates definable in the Presburger arithmetic. We think that this property can be extended to any protocol with leaders and using the $\star$ element in the definition of $\gamma$. We left this problem open since it is outside of the scope of this paper.
\end{remark}

\section{Interaction-Width}\label{sec:interaction}
The class of additive preorders is central for defining protocols as shown in the previous section. Additive preorders are a natural generalization of the Petri net reachability relations used in classical population protocols. For parameterized complexity purposes, we introduce in this section the notion of interaction-width that intuitively limits the number of agents that can communicate together in a single interaction step. As expected additive preorders with finite interaction-width are exactly the reachability relations of Petri nets.

\medskip

A \emph{$P$-transition} (or simply a \emph{transition} when the finite set $P$ is clear from the context) is a \emph{pair} $t\eqdef (\alpha_t,\beta_t)$ of $P$-configurations. Given such a transition, we introduce $|t|\eqdef \max\{|\alpha_t|,|\beta_t|\}$ called the \emph{interaction-width} of $t$. We associate with a transition $t$ the binary relation $\xrightarrow{t}$ over the configurations defined by $\alpha\xrightarrow{t}\beta$ if there exists a configuration $\rho$ such that $\alpha=\alpha_t+\rho$ and $\beta=\beta_t+\rho$. Notice that $\xrightarrow{t}$ is the minimal for the inclusion additive binary relation that contains $t$. Given a word $\sigma=t_1\ldots t_k$ of transitions, we introduce the binary relation $\xrightarrow{\sigma}$ over the configurations defined by $\alpha\xrightarrow{\sigma}\beta$ if there exists a sequence $\rho_0,\ldots,\rho_k$ of configurations such that:
$$\alpha=\rho_0\xrightarrow{t_1}\rho_1\cdots \xrightarrow{t_k}\rho_k=\beta$$

\medskip

\newcommand{\com}[1]{\operatorname{width}(#1)}

An additive preorder $R$ is said to have a \emph{finite interaction-width} if there exists $m\in\setN$ such that for every $(\alpha,\beta)\in R$, there exists a word $\sigma$ of transitions in $R$ with an interaction-width bounded by $m$ and such that $\alpha\xrightarrow{\sigma}\beta$. The minimal $m$ satisfying this property is called the \emph{interaction-width} of $R$, and it is denoted as $\com{R}$. When $R$ does not have a finite interaction-width, we define $\com{R}\eqdef\omega$. The \emph{interaction-width of a protocol} is defined as the interaction-width of its implicit additive preorder.

\medskip

Finite interaction-width additive preorders are related to Petri nets as follows. A $P$-Petri net $T$ (or simply a Petri net when $P$ is clear from the context) is a finite set of $P$-transitions. The reachability relation of a Petri net $T$ is the binary relation $\xrightarrow{T^*}$ over the configurations defined by $\alpha\xrightarrow{T^*}\beta$ if there exists $\sigma\in T^*$ such that $\alpha\xrightarrow{\sigma}\beta$. Observe that the reachability relation of a Petri net $T$ is an additive preorder with an interaction-width bounded by $\max_{t\in T}|t|$. Moreover, if $R$ is an additive preorder with a finite interaction-width then $R$ is the reachability relation of the Petri net $\{t\in R \mid |t|\leq \com{R}\}$. It follows that the class of additive preorders with finite interaction-widths is equal to the class of Petri net reachability relations.

\section{State-complexity of Protocols}\label{sec:complexity}
The state-complexity of a predicate is intuitively the minimal number of states of a protocol that stably computes it. In this section, we show that the state-complexity must involve the interaction-width and the number of leaders to discard trivial results.

\medskip

Since our paper focuses on the so-called counting predicates, let us first introduce those predicates. The \emph{$(i\geq n)$ predicate} where $i$ is a state and $n$ a positive natural number is the predicate $\phi_{i\geq n}:\setN^I\rightarrow\{0,1\}$ where $I\eqdef\{i\}$ satisfying $\phi_{i\geq n}(\rho)=1$ if, and only if $\rho(i)\geq n$ for every configuration $\rho\in\setN^I$. Such a predicate is called a \emph{counting predicate}.

\medskip

The following two examples show that counting the minimal number of states of protocols stably computing a counting predicate without taking into account the interaction-width or the number of leaders is not relevant.
\begin{example}
  This example shows that the predicate $\phi_{i\geq n}$ is stably computable by a leaderless conservative protocol with a number of states bounded $2$. We introduce the protocol $(P,\xrightarrow{*},\rho_L,I,\gamma)$ where $P\eqdef\{i,p\}$ for some state $p\not=i$, $\rho_L$ is the zero configuration, $I\eqdef\{i\}$, $\gamma^{-1}(\{0\})=\{i\}$, $\gamma^{-1}(\{1\})=\{p\}$, and $\xrightarrow{*}$ is the additive preorder defined by $\alpha\xrightarrow{*}\beta$ if there exists $m\in\setN$ such that $\beta+m.i=\alpha+m.p$ and such that $m=0\vee |\alpha|\geq n$. In fact, notice that for every $\rho\in \setN^I$ and $\alpha\in\setN^P$ such that $\rho|_P\xrightarrow{*}\alpha$, then either $\rho(i)<n$ and in that case $\alpha=\rho|_P$ is a $0$-output stable configuration, or $\rho(i)\geq n$ and in that case $\alpha\xrightarrow{*}\beta$ where $\beta$ is the $1$-output stable configuration defined as $\rho(i).p$. We have proved that the protocol stably computes $\phi_{i\geq n}$. The interaction-width of the previous protocol is bounded by $n$. In fact, just observe that the reachability relation of the Petri net $\{(\rho+i,\rho+p) \mid \rho\in\setN^P \wedge |\rho|=n-1\}$ is equal to $\xrightarrow{*}$. It follows that $\com{\xrightarrow{*}}\leq n$. In fact, one can easily prove that $\com{\xrightarrow{*}}=n$.
\end{example}

\begin{example}
  This example shows that the predicate $\phi_{i\geq n}$ is stably computable by a conservative protocol with a number of states bounded $6$ and an interaction-width bounded by $2$. We introduce the protocol $(P,\xrightarrow{T^*},n.\bar{i},I,\gamma)$ where $P\eqdef\{i,\bar{i},p,\bar{p},q,\bar{q}\}$, $I\eqdef\{i\}$, $\gamma^{-1}(\{1\})=\{i,p,q\}$, $\gamma^{-1}(\{0\})=\{\bar{i},\bar{p},\bar{q}\}$, and $T\eqdef\{t,t_p,\bar{t}_p,t_q,\bar{t}_q,t_{\bar{p}},t_{\bar{q}}\}$ is the Petri net defined as follows:
  $$
  t\eqdef(i+\bar{i},p+q)
  ~~~
  \begin{array}{r@{}l@{}}
    t_p&\eqdef(\bar{p}+i,p+i)\\
    \bar{t}_p&\eqdef(p+\bar{i},\bar{p}+\bar{i})
  \end{array}
  ~~~
  \begin{array}{r@{}l@{}}
    t_q&\eqdef(\bar{q}+i,q+i)\\
    \bar{t}_q&\eqdef(q+\bar{i},\bar{q}+\bar{i})
  \end{array}
  ~~~
  \begin{array}{r@{}l@{}}
    t_{\bar{q}}&\eqdef(p+\bar{q},p+q)\\
    t_{\bar{p}}&\eqdef(q+\bar{p},q+p)
  \end{array}
  $$
  Notice that each transition, except $t$, can only change the bar statue of an agent in states $p$ and $q$. Let $\rho\in\setN^I$ and let $\alpha$ be a configuration such that $n.\bar{i}+\rho|_P\xrightarrow{T^*}\alpha$ and let us prove that there exists a $\phi_{i\geq n}(\rho)$-output stable configuration $\beta$ such that $\alpha\xrightarrow{T^*}\beta$. First of all, by executing $\min\{\alpha(i),\alpha(\bar{i})\}$ times the transition $t$ from $\alpha$, we can assume without loss of generality that $\alpha(i)=0$ or $\alpha(\bar{i})=0$. Observe that $\alpha(\bar{i})=0$ if and only if $\phi_{i\geq n}(\rho)=1$. Assume first that $\alpha(\bar{i})>0$. In that case, by executing the transitions $\bar{t}_p$ and $\bar{t}_q$ the right number of times, we get from $\alpha$ a configuration $\beta$ such that $\beta(i)=\beta(p)=\beta(q)=0$. Notice that $\beta$ is $0$-output stable. Next, assume that $\alpha(i)>0$. In that case, by executing the transitions $t_p$ and $t_q$ the right number of times, we get a configuration $\beta$ such that $\beta(\bar{i})=\beta(\bar{p})=\beta(\bar{q})=0$. This configuration is $1$-output stable. Finally, assume that $\alpha(i)=0$ and $\alpha(\bar{i})=0$. Since $n>0$, by identifying the last time transition $t$ is executed, we deduce that there exist configurations $\mu,\delta$ such that $n.\bar{i}+\rho|_P\xrightarrow{T^*}\mu\xrightarrow{t}\delta\xrightarrow{T_0^*}\alpha$ such that $\delta(i)=\delta(\bar{i})=0$ where $T_0\eqdef \{t_{\bar{q}},t_{\bar{p}}\}$. Notice that $\delta(p)>0$ and $\delta(q)>0$. Since this property is invariant by executing the transitions in $T_0$, we deduce that $\alpha(p)>0$ and $\alpha(q)>0$. By executing $t_{\bar{p}}$ and $t_{\bar{q}}$ the right number of times, we get from $\alpha$ a configuration $\beta$ such that $\beta(\bar{i})=\beta(\bar{p})=\beta(\bar{q})=0$. This configuration is $1$-output stable. We have proved that the protocol stably computes $\phi_{i\geq n}$.
\end{example}

\medskip

In~\cite{DBLP:conf/stacs/BlondinEJ18}, it is exhibited an infinite set of natural numbers $n$ for which there exists a conservative population protocol stably computing the counting predicate $(i\geq n)$ with an interaction-width bounded by $2$, a number of leaders bounded by $2$, and a number of states bounded by $O(\log\log(n))$. This paper left open the optimality of that bound. In this paper we show that such a bound is almost optimal by proving the following main theorem.
\begin{theorem}\label{thm:main}
  For every protocol $(P,\xrightarrow{*},\rho_L,I,F)$ with a finite interaction-width that stably computes $\phi_{i\geq n}$, we have: 
$$n\leq (4+4\com{\xrightarrow{*}}+2|\rho_L|)^{|P|^{(|P|+2)^2}}$$
\end{theorem}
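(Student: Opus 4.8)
The plan is to assume for contradiction that $n>N:=(4+4w+2\ell)^{k^{(k+2)^2}}$, where $w:=\com{\xrightarrow{*}}$, $\ell:=|\rho_L|$ and $k:=|P|$, and to derive that $(P,\xrightarrow{*},\rho_L,I,\gamma)$ cannot stably compute $\phi_{i\ge n}$. Since $w<\omega$, the results of Section~\ref{sec:interaction} let me replace $\xrightarrow{*}$ by the reachability relation $\xrightarrow{T^*}$ of the finite Petri net $T:=\{t\in{\xrightarrow{*}}\mid|t|\le w\}$, so from now on I work with a genuine Petri net. Write $c_m:=\rho_L+m\cdot i$ for the initial configuration carrying $m$ agents in state $i$; stably computing $\phi_{i\ge n}$ means in particular that $c_m$ is accepted exactly when $m\ge n$.

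First I would record the only place where the definition of ``stably computes'' is used, a short logical lemma: \emph{if some configuration $\alpha$ is reachable both from $c_m$ with $m<n$ and from $c_{m'}$ with $m'\ge n$, then the protocol does not stably compute $\phi_{i\ge n}$}. Indeed, from $c_{m'}\xrightarrow{T^*}\alpha$ and $\phi_{i\ge n}(m')=1$ there is $\beta\in S_1$ with $\alpha\xrightarrow{T^*}\beta$; by transitivity $\beta$ is then reachable from $c_m$, so from $\phi_{i\ge n}(m)=0$ there is $\beta'\in S_0$ with $\beta\xrightarrow{T^*}\beta'$; but $\beta\in S_1$ forces $\gamma(\beta')=\{1\}$ while $\beta'\in S_0$ forces $\gamma(\beta')\subseteq\{0\}$, which is impossible. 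Hence the whole problem reduces to exhibiting a \emph{single} configuration reachable from some $c_m$ with $m<n$ and also from some $c_{m'}$ with $m'\ge n$.

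Such a configuration is produced by the bottom-configuration lemma of Section~\ref{sec:bot}, which is the technical core. The idea is that all sufficiently populated initial configurations $c_m$ funnel, by short executions, into a common ``bottom'' configuration: from $c_{m_0}$ with $m_0$ above a threshold $L_0$ one reaches, by an execution of bounded length, a bottom configuration $d$; being saturated on its support, $d$ absorbs the extra agents in state $i$ that the short execution did not touch, in particular $d+i\xrightarrow{T^*}d$ (and $i\in\support{d}$ since the execution touches only boundedly many of the $m_0$ agents). By additivity, $c_m=c_{m_0}+(m-m_0)\cdot i\xrightarrow{T^*}d+(m-m_0)\cdot i\xrightarrow{T^*}d$ for every $m\ge m_0$, so $d$ is reachable from $c_m$ for all $m\ge m_0$. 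Since this range straddles $n$ as soon as $m_0<n$, taking $\alpha:=d$ together with $m:=m_0$ and $m':=n$ supplies the common configuration, and by the logical lemma we have the contradiction --- provided $n$ exceeds $L_0$.

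The remaining work is quantitative: to guarantee $L_0\le N$ (and likewise the execution length), so that $m_0$ can be fixed in the nonempty range $[L_0,n)$. Here Sections~\ref{sec:stab} and~\ref{sec:PNS} enter. Section~\ref{sec:stab} uses Rackoff's technique to show that membership in $S_0$ and $S_1$ is governed by the profile of a configuration capped at a threshold $B\le(4+4w+2\ell)^{\mathrm{poly}(k)}$, which both bounds the data I must control and turns ``reachability into $S_0/S_1$'' into reachability of a bounded target. Section~\ref{sec:PNS} then casts the construction of the absorbing behaviour at $d$ as a small cycle in a Petri net with control states --- the control states being the $B$-capped profiles, at most $(B+1)^k$ of them --- and extracts such a cycle via a linear Diophantine system together with Pottier's bound on its minimal solutions. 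Composing these estimates (the Rackoff bound feeding the number of control states, which feeds Pottier's exponent) is precisely what yields a threshold of at most $(4+4w+2\ell)^{k^{(k+2)^2}}=N$, closing the argument and forcing $n\le N$. The genuine obstacle I expect is the bottom-configuration lemma itself: establishing that a short execution reaches a saturated configuration is delicate because this model permits agent creation and destruction, so the reachable set is not monotone and one must analyse the long-term structure of $\xrightarrow{T^*}$ directly --- which is exactly what Sections~\ref{sec:stab}--\ref{sec:PNS} are built to support.
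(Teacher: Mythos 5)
Your reduction to exhibiting a configuration reachable both from a rejected initial configuration $c_m$ ($m<n$) and from an accepted one $c_{m'}$ ($m'\geq n$) is a sound target, and your logical lemma is correct. The gap is in how you produce such a configuration: you claim that the bottom configuration $d$ reached from $c_{m_0}$ satisfies $d+i\xrightarrow{T^*}d$, i.e.\ that $d$ ``absorbs'' extra agents in state $i$. Nothing gives you this. Being $T$-bottom only says that every configuration reachable \emph{from $d$} can return to $d$; it says nothing about the different configuration $d+i$. Worse, the claim is false for every conservative protocol (and the Blondin--Esparza--Jaax protocols realizing the matching upper bound are conservative): there $|d+i|=|d|+1$, so $d+i\xrightarrow{T^*}d$ is impossible. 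With that step removed your funnelling argument collapses, and you never obtain a configuration reachable from both sides of the threshold. Note also that Theorem~\ref{thm:extract} is applied in the paper to $\rho_L$ restricted to $P\setminus I$ and yields bottomness only for the projection $\alpha|_Q$ with respect to $T|_Q$, not a bottom configuration of the full net containing the $i$-agents.

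The actual mechanism is different and does not destroy the extra agents: it \emph{parks} them in states where the reached $0$-output stable configuration $\mu$ is already large. The run from $\rho_L+(n-1).i$ to $\mu$, projected onto a Petri net with control-states whose control-states are the small $T|_Q$-component from Theorem~\ref{thm:extract}, must contain a multicycle $\Theta$ with $\Delta(\Theta)(i)\leq -k$, since almost all of the $n-1$ agents in state $i$ get consumed. Lemma~\ref{lem:multi} (Pottier) extracts a short multicycle $\Theta'$ with $\Delta(\Theta')(i)=-m<0$, zero displacement on the set $R$ of states where $\mu$ is small, and nonnegative displacement elsewhere; inserting it into the suitably pumped run gives $\rho_L+(n-1+m).i\xrightarrow{T^*}\mu+\eta$ with $(\mu+\eta)|_R=\mu|_R$. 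One then needs Lemma~\ref{lem:basis} (the Rackoff-based ``small values'' characterization of stabilized configurations) to conclude that $\mu+\eta$, though \emph{not equal} to $\mu$, is still $(T,F)$-stabilized, hence $0$-output stable, contradicting $n-1+m\geq n$. So the two configurations reached only agree on their small coordinates, and the Rackoff lemma is precisely what makes that agreement sufficient; your sketch delegates all of this to an absorption property the model does not provide.
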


We deduce the following state complexity lower-bound as a corollary.
\begin{corollary}\label{cor:main}
  Let $h<\frac{1}{2}$ and let $m\geq 1$. The number of states of a protocol stably computing the $(i\geq n)$ predicate with an interaction-width bounded by $m$ and a number of leaders bounded by $m$ is at least $\Omega((\log\log(n))^h)$.
\end{corollary}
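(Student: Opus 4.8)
This is a corollary of Theorem~\ref{thm:main}, obtained by inverting its doubly-exponential bound; the argument is a routine asymptotic computation.

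Fix $m\ge 1$ and $h<\tfrac12$, and let $(P,\xrightarrow{*},\rho_L,I,\gamma)$ be any protocol with $\com{\xrightarrow{*}}\le m$ (so in particular of finite interaction-width) and $|\rho_L|\le m$ that stably computes $\phi_{i\ge n}$. Write $k\eqdef|P|$. Theorem~\ref{thm:main} gives
$$n\;\le\;\bigl(4+4\com{\xrightarrow{*}}+2|\rho_L|\bigr)^{k^{(k+2)^2}}\;\le\;(4+6m)^{k^{(k+2)^2}}.$$

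Next I would apply $\log$ twice (in any fixed base $>1$). Assuming $n$ large enough that $\log n\ge 1$, this yields
$$\log\log n\;\le\;(k+2)^2\log k+\log\log(4+6m).$$
Fix $\epsilon>0$. The right-hand side is $o(k^{2+\epsilon})$, since $(k+2)^2\log k=o(k^{2+\epsilon})$ and $\log\log(4+6m)$ is a constant depending only on $m$; hence there is a threshold $k_\epsilon=k_\epsilon(m,\epsilon)$ such that $k\ge k_\epsilon$ implies $\log\log n\le k^{2+\epsilon}$, i.e.\ $k\ge(\log\log n)^{1/(2+\epsilon)}$.

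It remains to pick $\epsilon$ and to discharge the hypothesis $k\ge k_\epsilon$. Since $h<\tfrac12$ we have $1/h>2$, so we may choose $\epsilon\in\bigl(0,\tfrac1h-2\bigr)$, whence $\tfrac1{2+\epsilon}>h$. If $k<k_\epsilon$, then Theorem~\ref{thm:main} already bounds $n$ by the constant $(4+6m)^{k_\epsilon^{(k_\epsilon+2)^2}}$; so for every $n$ exceeding this constant (and with $\log\log n\ge 1$) we necessarily have $k\ge k_\epsilon$, and therefore
$$k\;\ge\;(\log\log n)^{1/(2+\epsilon)}\;\ge\;(\log\log n)^{h}.$$
Thus $k=\Omega((\log\log n)^h)$, with the implied constant --- and the range of $n$ in which the estimate is meaningful --- depending only on $m$ and $h$; base-independence of the conclusion follows since logarithms in different bases differ by a constant factor.

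The whole argument is elementary and there is no genuine obstacle: Theorem~\ref{thm:main} does all the work. The only point needing a little care is the interplay of the lower-order terms --- the factor $\log k$ inside the exponent $k^{(k+2)^2}$, and the additive constant $\log\log(4+6m)$ --- which forces one to invert to the exponent $1/(2+\epsilon)$ rather than exactly $1/2$, and then to use the slack $h<\tfrac12$ to absorb the loss.
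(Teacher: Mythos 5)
Your argument is correct and follows essentially the same route as the paper: invert the doubly-exponential bound of Theorem~\ref{thm:main} by taking logarithms twice and use the slack $h<\tfrac12$ (via an auxiliary $\epsilon$ with $\tfrac{1}{2+\epsilon}\ge h$) to absorb the lower-order factor $\log k$ in the exponent. The only cosmetic remark is that the step ``$h<\tfrac12$ implies $1/h>2$'' tacitly assumes $h>0$; for $h\le 0$ the claim is trivial since $|P|\ge 1$.
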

\begin{proof}
  Let us consider $\varepsilon>0$ such that $\frac{1}{2+\varepsilon}\geq h$. Notice that for $d\in\setN$ large enough, we have $d\leq 2^{(d+2)^\varepsilon}$. It follows that $d^{(d+2)^2}\leq 2^{(d+2)^{2+\varepsilon}}$ for $d$ large enough. Theorem~\ref{thm:main} shows that a protocol stably computing the $(i\geq n)$ predicate with an interaction-width bounded by $m$ and a number of leaders bounded by $m$ satisfies:
  $$n\leq (10m)^{|P|^{(|P|+2)^2}}$$
  It follows that if $n$ is large enough, then $|P|$ is large enough to satisfy $|P|^{(|P|+2)^2}\leq 2^{(|P|+2)^{2+\varepsilon}}$. It follows that $\log\log(n)\leq \log\log(10m)+\log(2)(|P|+2)^{2+\varepsilon}$. In particular:
  $$|P|\geq \left(\frac{\log\log(n)-\log\log(10m)}{\log(2)}\right)^h-2$$
  We have proved the corollary.
\end{proof}

\section{Small Stable Configurations}\label{sec:stab}
A $P$-configuration $\rho$ is said to be \emph{$(T,F)$-stabilized} where $T$ is a $P$-Petri net, and $F$ is a subset of $P$ if for every configuration $\beta$ such that $\rho\xrightarrow{T^*}\beta$, we have $\beta(p)=0$ for every $p\in P\backslash F$. In this section, we show that $(T,F)$-stabilized configurations are characterized by ``small values''. This result is obtained by applying classical Rackoff's techniques originally introduced for the Petri net coverability problem. The definition of stabilized configurations is related to the output stable configurations of protocols as shown by the following lemma.
\begin{lemma}\label{lem:stable}
  Let $(P,\xrightarrow{T^*},\rho_L,I,\gamma)$ be a protocol where $T$ is a Petri net and let $F\eqdef \gamma^{-1}(\{0\})$. A configuration is $(T,F)$-stabilized if, and only if, it is $0$-output stable. 
\end{lemma}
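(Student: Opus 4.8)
The plan is to prove the two implications separately, unfolding both definitions and using the fact that for a protocol of the form $(P,\xrightarrow{T^*},\rho_L,I,\gamma)$, the additive preorder is exactly the reachability relation $\xrightarrow{T^*}$, and that $\gamma^{-1}(\{0\})=F$ by hypothesis. Recall that $\rho$ is $(T,F)$-stabilized iff every $\beta$ with $\rho\xrightarrow{T^*}\beta$ satisfies $\beta(p)=0$ for all $p\in P\setminus F$; and $\rho\in S_0$ iff every such $\beta$ satisfies $\gamma(\beta)\subseteq\{0\}$. So the whole statement reduces to checking that, for a configuration $\beta$ reachable from $\rho$, the condition ``$\beta(p)=0$ for every $p\in P\setminus F$'' is equivalent to ``$\gamma(\beta)\subseteq\{0\}$''.

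For the forward direction, suppose $\rho$ is $(T,F)$-stabilized and let $\rho\xrightarrow{T^*}\beta$. Then $\beta$ is supported on $F=\gamma^{-1}(\{0\})$. By the extension of $\gamma$ to configurations, $\gamma(\beta)=\{j\mid\exists p\in P,\ \beta(p)>0\wedge\gamma(p)=j\}$; since every $p$ with $\beta(p)>0$ lies in $F$ and hence has $\gamma(p)=0$, we get $\gamma(\beta)\subseteq\{0\}$. As $\beta$ was arbitrary, $\rho\in S_0$. For the converse, suppose $\rho\in S_0$ and let $\rho\xrightarrow{T^*}\beta$; then $\gamma(\beta)\subseteq\{0\}$, which means that for every $p\in P$ with $\beta(p)>0$ we have $\gamma(p)=0$, i.e.\ $p\in F$. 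Equivalently, $\beta(p)=0$ for every $p\in P\setminus F$. Since $\beta$ was arbitrary, $\rho$ is $(T,F)$-stabilized.

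The one mild subtlety worth addressing explicitly is the zero configuration and the exact shape of the definition of $S_0$, which the text flags as being designed ``to manage the zero configuration.'' If $\beta$ is the zero configuration then $\gamma(\beta)=\emptyset\subseteq\{0\}$ holds vacuously, and simultaneously $\beta(p)=0$ for every $p\in P\setminus F$ (indeed for every $p\in P$) holds trivially, so the equivalence is unaffected in this corner case; the argument above already covers it since it never assumes $\beta\neq 0$. I expect no real obstacle here: the lemma is essentially a definitional unfolding, and the only thing to be careful about is to quantify over reachable $\beta$ correctly and to invoke $F=\gamma^{-1}(\{0\})$ at the right place. I would write the proof in two short paragraphs mirroring the two directions above.
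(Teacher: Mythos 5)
Your proof is correct and matches the paper's approach: the paper's own proof is simply ``By definition,'' and your argument is exactly the definitional unfolding that this shorthand stands for, with the zero-configuration corner case correctly handled as well.
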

\begin{proof}
  By definition.
\end{proof}

\medskip

We first introduce some notations.
Given a $P$-configuration $\rho$, we introduce $\norm{\rho}_\infty\eqdef\max_{p\in P}\rho(p)$. Given a transition $t=(\alpha_t,\beta_t)$, we also introduce $\norm{t}_\infty\eqdef\max\{\norm{\alpha_t}_\infty,\norm{\beta_t}_\infty\}$. Given a Petri net $T$, we define $\norm{T}_\infty\eqdef\max_{t\in T}\norm{t}_\infty$.
Given a finite set $Q$ of states, we define several restrictions related to $Q$ as follows. Let us recall that given a $P$-configuration $\rho$, we previously defined $\rho|_Q$ as the $Q$-configuration defined by $\rho|_Q(q)\eqdef \rho(q)$ if $q\in Q$, and zero otherwise. Given a $P$-transition $t=(\alpha_t,\beta_t)$, we define the $Q$-transition $t|_Q$ as the pair $t|_Q\eqdef (\alpha_t|_Q,\beta_t|_Q)$. Given a $P$-Petri net $T$, we introduce the $Q$-Petri net $T|_Q\eqdef\{t|_Q \mid t\in T\}$. Given a word $\sigma=t_1\ldots t_k$ of $P$-transitions, we introduce the word $\sigma|_Q=t_1|_Q\ldots t_k|_Q$ of $Q$-transitions. Notice that $\alpha\xrightarrow{\sigma}\beta$ for some $P$-configurations $\alpha,\beta$ implies $\alpha|_Q\xrightarrow{\sigma|_Q}\beta|_Q$. The converse property is true in some cases as shown by the following lemma.
\begin{lemma}\label{lem:large}
  Assume that $\alpha|_Q\xrightarrow{\sigma|_Q}\rho$ for some $P$-configurations $\alpha,\rho$, some word $\sigma$ of transitions in a $P$-Petri net $T$, and some finite set $Q$. If $\alpha(p)\geq |\sigma|\norm{T}_\infty$ for every $p\in P\backslash Q$ then there exists a configuration $\beta$ such that $\alpha\xrightarrow{\sigma}\beta$, $\beta|_Q=\rho$, and $\beta(p)\geq \alpha(p)-|\sigma|\norm{T}_\infty$ for every $p\in P\backslash Q$.
\end{lemma}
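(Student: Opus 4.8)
The plan is to prove the lemma by induction on the length $|\sigma|$ of the word $\sigma$, peeling off transitions from the front so that the hypothesis on $\alpha$ (the starting configuration) is the one being consumed. The base case $|\sigma|=0$ is immediate: then $\sigma$ is empty, the hypothesis $\alpha|_Q\xrightarrow{\sigma|_Q}\rho$ reads $\alpha|_Q=\rho$, and $\beta\eqdef\alpha$ works since $\beta|_Q=\rho$ and $\beta(p)=\alpha(p)\geq\alpha(p)-0$ for every $p\in P\backslash Q$.

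For the inductive step, write $\sigma=t\sigma'$ with $t=(\alpha_t,\beta_t)\in T$, so $|\sigma'|=|\sigma|-1$, and decompose the projected run as $\alpha|_Q\xrightarrow{t|_Q}\mu\xrightarrow{\sigma'|_Q}\rho$ for some $Q$-configuration $\mu$. The key step is to lift the first projected step to a full step $\alpha\xrightarrow{t}\alpha'$ with $\alpha'|_Q=\mu$. For this one checks $\alpha\geq\alpha_t$ componentwise: on coordinates $q\in Q$ this follows from $\alpha|_Q\geq\alpha_t|_Q$ (which is entailed by $\alpha|_Q\xrightarrow{t|_Q}\mu$), while on coordinates $p\in P\backslash Q$ it follows from the slack hypothesis $\alpha(p)\geq|\sigma|\norm{T}_\infty\geq\norm{T}_\infty\geq\alpha_t(p)$, using $|\sigma|\geq 1$. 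Setting $\alpha'\eqdef\alpha-\alpha_t+\beta_t$, a direct computation gives $\alpha'|_Q=\alpha|_Q-\alpha_t|_Q+\beta_t|_Q=\mu$, and since a transition removes at most $\norm{T}_\infty$ tokens from any coordinate, $\alpha'(p)\geq\alpha(p)-\norm{T}_\infty\geq(|\sigma|-1)\norm{T}_\infty=|\sigma'|\norm{T}_\infty$ for every $p\in P\backslash Q$.

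Now apply the induction hypothesis to $\alpha'$, $\sigma'$ and $\rho$: the hypotheses $\alpha'|_Q=\mu\xrightarrow{\sigma'|_Q}\rho$ and $\alpha'(p)\geq|\sigma'|\norm{T}_\infty$ for $p\in P\backslash Q$ hold, so there is a configuration $\beta$ with $\alpha'\xrightarrow{\sigma'}\beta$, $\beta|_Q=\rho$, and $\beta(p)\geq\alpha'(p)-|\sigma'|\norm{T}_\infty$ for every $p\in P\backslash Q$. Concatenating, $\alpha\xrightarrow{t}\alpha'\xrightarrow{\sigma'}\beta$, i.e. $\alpha\xrightarrow{\sigma}\beta$; and for $p\in P\backslash Q$, chaining the two inequalities yields $\beta(p)\geq\alpha'(p)-|\sigma'|\norm{T}_\infty\geq\alpha(p)-\norm{T}_\infty-|\sigma'|\norm{T}_\infty=\alpha(p)-|\sigma|\norm{T}_\infty$, which is exactly the required bound. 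This closes the induction.

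I do not expect a genuine obstacle here: this is a routine ``token-threshold'' lifting argument in the spirit of Rackoff's techniques. The only care needed is in the bookkeeping of the constant — checking that the available slack $|\sigma|\norm{T}_\infty$ drops by exactly $\norm{T}_\infty$ at each step so that the induction closes, and that the total accumulated loss is at most $|\sigma|\norm{T}_\infty$. One should also note that the degenerate cases ($P\backslash Q=\emptyset$ or $\norm{T}_\infty=0$) are handled uniformly by the same argument.
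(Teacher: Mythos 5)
Your proof is correct and follows exactly the route the paper intends: the paper's proof of this lemma is literally ``simple induction on $|\sigma|$'', and your induction (peeling the first transition, lifting it using the slack $\alpha(p)\geq|\sigma|\norm{T}_\infty\geq\norm{T}_\infty\geq\alpha_t(p)$ on $P\backslash Q$, and noting the slack drops by at most $\norm{T}_\infty$ per step) is the standard way to carry it out. No gaps; the bookkeeping of the threshold and the degenerate cases is handled correctly.
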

\begin{proof}
  Simple induction on $|\sigma|$.
\end{proof}

\medskip

\medskip

A configuration $\rho$ is said to be \emph{$T$-coverable} from a configuration $\alpha$ where $T$ is a Petri net if there exists a word $\sigma\in T^*$ such that $\alpha\xrightarrow{\sigma}\beta\geq \rho$ for some configuration $\beta$. The minimal length of such a word $\sigma$ can be bounded using Rackoff's techniques with respect to $\norm{\rho}_\infty$, $\norm{T}_\infty$, and $|Q|$ as shown by the following result introduced in \cite{DBLP:journals/tcs/Rackoff78} to prove that the $T$-coverability problem is decidable in exponential space. 
\begin{lemma}[\cite{DBLP:journals/tcs/Rackoff78}]\label{lem:rackoff}
  If a configuration $\rho$ is $T$-coverable from a configuration $\alpha$ where $T$ is a $P$-Petri net, then there exists $\sigma\in T^*$ with a length bounded by $(\norm{\rho}_\infty+\norm{T}_\infty)^{{|P|}^{|P|}}$, and a configuration $\beta$ such that $\alpha\xrightarrow{\sigma}\beta\geq \rho$.
\end{lemma}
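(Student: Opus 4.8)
The plan is to carry out Rackoff's coverability argument~\cite{DBLP:journals/tcs/Rackoff78}. Write $d\eqdef|P|$ and $B\eqdef\normi{\rho}+\normi{T}$, fix an enumeration $p_1,\dots,p_d$ of $P$, and for $0\le i\le d$ call a word $\sigma=t_1\cdots t_k$ over $T$ an \emph{$i$-path from} a source $\alpha$ if, writing $Q_i\eqdef\{p_1,\dots,p_i\}$, the restriction $\sigma|_{Q_i}$ is fireable from $\alpha|_{Q_i}$ in the usual sense — so the first $i$ coordinates stay non-negative along the run while the remaining coordinates are free to take arbitrary integer values — and say that such a path \emph{$i$-covers} $\rho$ if moreover the $\setZ$-valued marking $\alpha+\sum_{j}(\beta_{t_j}-\alpha_{t_j})$ reached after $\sigma$ is $\ge\rho$. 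Let $m(i)$ be the least bound such that, for every $P$-Petri net $T$ with $\normi{T}\le B$, every source $\alpha$, and every $\rho$ with $\normi{\rho}\le B$ that is $i$-coverable from $\alpha$, some $i$-path of length at most $m(i)$ already $i$-covers $\rho$ from $\alpha$. Since an ordinary covering witness is exactly a $d$-path that $d$-covers, it suffices to prove $m(d)\le B^{|P|^{|P|}}$.

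For the base case $i=0$ there is no positivity requirement at all, so a $0$-path is an arbitrary walk in $\setZ^{P}$ whose only relevant feature is its total effect, a non-negative integer combination $\sum_t x_t(\beta_t-\alpha_t)$ of the transition effects; hence $0$-coverability is exactly the solvability over $x\in\setN^{T}$ of the linear system $\sum_t x_t\bigl(\beta_t(p)-\alpha_t(p)\bigr)\ge\rho(p)-\alpha(p)$, $p\in P$, a system of $d$ inequalities whose coefficients and right-hand sides have absolute value at most $B$. A standard Carathéodory / small-solution argument produces a solution of $\ell_1$-norm at most $B^{O(d)}$, so $m(0)\le B^{O(d)}$.

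The inductive step is the heart of the matter. Fixing $i$, I would take an $(i{+}1)$-coverable instance and a \emph{shortest} $(i{+}1)$-path $\sigma$ witnessing it, with intermediate $\setZ$-markings $\mu_0,\dots,\mu_k$, and set the threshold $\theta\eqdef B+m(i)\cdot\normi{T}$. If some $\mu_j(p_{i+1})\ge\theta$, I cut $\sigma=\sigma_1\sigma_2$ at the first such $j$: dropping the positivity constraint on $p_{i+1}$, the suffix $\sigma_2$ shows that $\rho$ is $i$-coverable from $\mu_j$, so there is an $i$-path $\sigma_2'$ of length $\le m(i)$ that $i$-covers $\rho$ from $\mu_j$; along $\sigma_2'$ the coordinate $p_{i+1}$ drops by at most $m(i)\normi{T}$, hence stays $\ge\theta-m(i)\normi{T}=B$, which — being $\ge\normi{T}$ — keeps every step fireable on $p_{i+1}$ and — being $\ge\rho(p_{i+1})$ — still covers $\rho$ there, so $\sigma_1\sigma_2'$ is again a genuine $(i{+}1)$-path $(i{+}1)$-covering $\rho$ from $\alpha$, and minimality of $\sigma$ forces $|\sigma_2|\le m(i)$. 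It remains to bound $\sigma_1$, along which $p_{i+1}$ — and, once source coordinates that already exceed a suitable threshold are treated as unbounded, also $p_1,\dots,p_i$ — stay below a bound $M$ polynomial in $B$ and $m(i)$; a Dickson/well-quasi-order pumping argument on these finitely-ranged coordinates removes repeated patterns and leaves a prefix whose length is polynomial of degree $O(|P|)$ in $M$, up to a cheap $\setZ$-covering correction (as in the base case) that repairs the unconstrained coordinates without disturbing the constrained ones. The complementary case, in which $p_{i+1}$ never reaches $\theta$, is treated the same way applied to all of $\sigma$.

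Putting the pieces together yields a recursion bounding $m(i{+}1)$ by a polynomial of degree $O(|P|)$ in $B$ and $m(i)$, so the exponent of $B$ is multiplied by roughly $|P|$ at each of the $|P|$ levels; the standard accounting — carried out carefully by Rackoff, whose analysis keeps this exponent below $|P|^{|P|}$ — then gives $m(d)\le(\normi{\rho}+\normi{T})^{|P|^{|P|}}$, which is the claim. I expect this inductive step to be the main obstacle: the surgery must be arranged so that, simultaneously, the shortened prefix is still a genuine $(i{+}1)$-path, the unconstrained coordinates get repaired cheaply at the end, and the source and target norms fed back into the recursion remain bounded by $B$ — keeping all three under control at once, via the threshold $\theta$ together with the capping of large source coordinates, is the delicate part.
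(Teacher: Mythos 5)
Your high-level plan (induction on the number of constrained coordinates, a threshold at which a coordinate can be dropped, a counting bound on the prefix) is Rackoff's scheme and the paper's, but your formulation of the induction invariant introduces a genuine gap. You require that an $i$-covering path end in a $\setZ$-marking that dominates $\rho$ on \emph{all} of $P$, including the unconstrained coordinates. With that requirement the key step of bounding the prefix $\sigma_1$ breaks: the standard argument is that, while all constrained coordinates stay below the threshold, there are at most $b^{i+1}$ distinct restricted markings, and a repeated restricted marking lets you cut out the loop, contradicting minimality. Cutting such a loop leaves the constrained coordinates intact but changes the final values of the unconstrained ones, so under your definition the shortened path may no longer cover $\rho$ there, and minimality no longer forbids repetitions. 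Your proposed fix --- a ``$\setZ$-covering correction that repairs the unconstrained coordinates without disturbing the constrained ones'' --- is not available in general: every appended transition also acts on the constrained coordinates, where it must both preserve nonnegativity along the way and not destroy the covering already achieved. This is exactly why the paper (following Rackoff) parametrizes the induction by a subset $Q$ of places and asks only that $\rho|_Q$ be covered, ignoring everything outside $Q$: then the base case is trivial ($\lambda_0=0$), loop-cutting is harmless, and the counting argument is sound.

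Two further points. First, you only treat the case where the fixed coordinate $p_{i+1}$ crosses the threshold; in the inductive step it is \emph{whichever} constrained place first exceeds $b$ that must be dropped (the paper handles this by letting $Q$ range over all subsets and removing the offending place $x$; Rackoff does it by a permutation/WLOG argument). Your device of ``treating large source coordinates as unbounded'' does not replace this and is not developed into an argument. Second, the quantitative accounting is not free: with your nontrivial base case $m(0)\leq B^{O(d)}$ and additional polynomial repair costs, the recursion does not obviously stay within the stated exponent $|P|^{|P|}$, whereas the paper's explicit recursion $\lambda_{d+1}=(r+s\lambda_d)^{d+1}+\lambda_d$ with $\lambda_0=0$ is shown by induction to satisfy $\lambda_d\leq(r+s)^{d^d}$, which is the bound the lemma actually claims. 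Deferring this to ``standard accounting carried out by Rackoff'' is not sufficient here, since the precise bound is the content of the statement.
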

\begin{proof}
  This is a classical result obtained by Rackoff in~\cite{DBLP:journals/tcs/Rackoff78} by induction on $|P|$. 
\end{proof}

We deduce the following lemma that shows that $(T,F)$-stabilized configurations are characterized by ``small values'' (the values $\rho(p)$ for $p\in R$). In the statement of that lemma the relation $\leq$ over the $P$-configurations is defined by $\alpha\leq \beta$ if there exists a $P$-configuration $\rho$ such that $\beta=\alpha+\rho$.
\begin{lemma}\label{lem:basis}
  Let $\rho$ be a $(T,F)$-stabilized $P$-configuration with $F\subseteq P$, let $h$ be a positive integer satisfying $h\geq \norm{T}_\infty(1+\norm{T}_\infty)^{{|P|}^{|P|}}$, and let $R\eqdef \{p\in P \mid \rho(p)< h\}$. Every $P$-configuration $\alpha$ such that $\alpha|_R\leq \rho|_R$ is $(T,F)$-stabilized.
\end{lemma}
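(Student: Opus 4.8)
The plan is to prove the statement by contradiction: fix a $P$-configuration $\alpha$ with $\alpha|_R\leq\rho|_R$, assume it is \emph{not} $(T,F)$-stabilized, and derive that $\rho$ is not $(T,F)$-stabilized either. The first observation is that, since $\xrightarrow{T^*}$ is a preorder, $\rho\xrightarrow{T^*}\rho$, so $(T,F)$-stabilization of $\rho$ gives $\rho(p)=0$ for every $p\in P\backslash F$; as $h\geq 1$ this forces $P\backslash F\subseteq R$. Hence the states on which stabilization can fail all lie in the small part $R$, which is exactly what makes the restriction to $R$ carry the whole obstruction. Now $\alpha$ not being $(T,F)$-stabilized means there are $\sigma_0\in T^*$ and a configuration $\beta_0$ with $\alpha\xrightarrow{\sigma_0}\beta_0$ and $\beta_0(q)>0$ for some $q\in P\backslash F$; equivalently, the unit configuration $q$ is $T$-coverable from $\alpha$, and by the previous remark $q\in R$.

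Next I would apply Rackoff's bound (Lemma~\ref{lem:rackoff}) to this covering: there exist $\sigma\in T^*$ with $|\sigma|\leq(1+\norm{T}_\infty)^{{|P|}^{|P|}}$ and a configuration $\beta'$ with $\alpha\xrightarrow{\sigma}\beta'$ and $\beta'(q)\geq 1$. The hypothesis on $h$ is tailored precisely so that $h\geq\norm{T}_\infty\,|\sigma|$. Projecting this execution onto $R$ gives $\alpha|_R\xrightarrow{\sigma|_R}\beta'|_R$, and writing $\rho|_R=\alpha|_R+\mu$ for the $R$-configuration $\mu$ (which exists since $\alpha|_R\leq\rho|_R$), additivity yields $\rho|_R\xrightarrow{\sigma|_R}\beta'|_R+\mu$. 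I would then invoke Lemma~\ref{lem:large} with the $P$-configuration $\rho$ in the role of its ``$\alpha$'', the Petri net $T$, the word $\sigma$, the set $Q:=R$, and target $\beta'|_R+\mu$: its side condition holds because $\rho(p)\geq h\geq\norm{T}_\infty(1+\norm{T}_\infty)^{{|P|}^{|P|}}\geq|\sigma|\,\norm{T}_\infty$ for every $p\in P\backslash R$. This yields a configuration $\beta$ with $\rho\xrightarrow{\sigma}\beta$ and $\beta|_R=\beta'|_R+\mu$. Since $q\in R$, we get $\beta(q)=\beta'(q)+\mu(q)\geq\beta'(q)\geq 1>0$ with $q\in P\backslash F$, contradicting that $\rho$ is $(T,F)$-stabilized. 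Hence $\alpha$ is $(T,F)$-stabilized.

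The step I expect to need the most care is getting the two inequalities to point in compatible directions when applying Lemma~\ref{lem:large}: the lifted execution must be run from $\rho$ rather than from $\alpha$, so that the large values $\rho(p)\geq h$ on $P\backslash R$ can absorb the $|\sigma|\,\norm{T}_\infty$ deficit of that lemma, while the hypothesis $\alpha|_R\leq\rho|_R$ is precisely what allows the projected execution to start from $\rho|_R$; and it is the automatic inclusion $P\backslash F\subseteq R$ that guarantees the coverability witness $q$ survives the restriction to $R$. Everything else — reflexivity of $\xrightarrow{T^*}$, additivity of $\xrightarrow{\sigma|_R}$, and the arithmetic bounding $|\sigma|\,\norm{T}_\infty$ by $h$ — is routine.
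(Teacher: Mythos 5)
Your proof is correct and follows essentially the same route as the paper's: argue by contradiction, note the offending state $q\in P\backslash F$ must lie in $R$ because $\rho$ is stabilized (hence $\rho(q)=0<h$), shorten the covering execution via Lemma~\ref{lem:rackoff}, project to $R$, add the slack $\mu$ by additivity, and lift back to an execution from $\rho$ with Lemma~\ref{lem:large} using $\rho(p)\geq h\geq|\sigma|\,\norm{T}_\infty$ on $P\backslash R$. The bookkeeping of the side conditions matches the paper's argument, so nothing further is needed.
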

\begin{proof}
  Let us consider a $P$-configuration $\alpha$ such that $\alpha|_R\leq \rho|_R$. There exists a $R$-configuration $\mu$ such that $\rho|_R=\alpha|_R+\mu$. Assume by contradiction that $\alpha$ is not $(T,F)$-stabilized. It follows that there exists a configuration $\beta$ such that $\alpha\xrightarrow{T^*}\beta$ and $\beta(p)>0$ for some place $p\in P\backslash F$. Since $p\not\in F$ and $\rho$ is $(T,F)$-stabilized, we deduce that $\rho(p)=0$. In particular $p\in R$ since $h>0$. Since $p|_P$ is $T$-coverable from $\alpha$, Lemma~\ref{lem:rackoff} shows there exists a word $\sigma\in T^*$ of length bounded by $(1+\norm{T}_\infty)^{{|P|}^{|P|}}$ and a configuration $\eta$ such that $\alpha\xrightarrow{\sigma}\eta\geq p|_P$. It follows that $\alpha|_R\xrightarrow{\sigma|_R}\eta|_R$. From this relation and $\rho_R=\alpha|_R+\mu$, we deduce that $\rho_R\xrightarrow{\sigma|_R}\eta|_R+\mu$. Lemma~\ref{lem:large} shows that there exists a configuration $\delta$ such that $\rho\xrightarrow{\sigma}\delta$ and $\delta|_R=\eta|_R+\mu$. Since $p\in R$, we deduce that $\delta(p)=\eta(p)+\mu(p)\geq p|_P(p)=1$. As $p\not\in F$, it follows that $\rho$ is not $(T,F)$-stabilized and we get a contradiction. We have proved the lemma.
\end{proof}

\begin{remark}
  A similar result was provided in~\cite{DBLP:conf/podc/CzernerE21} in the context of conservative Petri nets with interaction-width bounded by two.
\end{remark}

\section{Bottom Configurations}\label{sec:bot}
Let $T$ be a $P$-Petri net. The \emph{$T$-component} of a $P$-configuration $\rho$ is the set of configurations $\beta$ such that $\rho\xrightarrow{T^*}\beta\xrightarrow{T^*}\rho$. A configuration $\rho$ is said to be \emph{$T$-bottom} if its $T$-component is finite and every configuration $\beta$ such that $\rho\xrightarrow{T^*}\beta$ satisfies $\beta\xrightarrow{T^*}\rho$.

\medskip

In this section we prove the following theorem that intuitively provides a way to reach with short words kind of bottom configurations with small size (small and short meaning doubly-exponential in that context). Other results proved in this section are only used for proving the following theorem and are no longer used in the sequel.
\begin{theorem}\label{thm:extract}
  Let $T$ be a $P$-Petri net, let $\rho$ be a $P$-configuration, and let $b\eqdef(4+4\norm{T}_\infty+2\norm{\rho}_\infty)^{d^d(1+(2+d^d)^{d+1})}$ with $d\eqdef|P|$. There exist two words $\sigma,w\in T^*$, a set of places $Q\subseteq P$, and two $P$-configurations $\alpha,\beta$ such that:
  \begin{itemize}
  \item $\rho\xrightarrow{\sigma}\alpha\xrightarrow{w}\beta$.
  \item $\alpha|_Q=\beta|_Q$.
  \item $\alpha(p)<\beta(p)$ for every state $p\in P\backslash Q$.
  \item $\alpha|_Q$ is $T|_Q$-bottom.
  \item The cardinal of the $T|_Q$-component of $\alpha|_Q$ is bounded by $b$.
  \item $|\sigma|,|w|,d\norm{\alpha}_\infty,d\norm{\beta}_\infty\leq b$.
  \end{itemize}
\end{theorem}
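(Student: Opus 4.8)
The plan is to iterate a "pump-or-stabilize" argument along any sufficiently long execution from $\rho$, using Rackoff-style length bounds at each stage to keep everything doubly-exponential. First I would look for a place $p\in P$ and a reachable configuration along which $p$ can be made arbitrarily large; more precisely, I would try to build a descending chain of place-sets $P=Q_0\supsetneq Q_1\supsetneq\cdots$ together with configurations $\rho=\alpha_0\xrightarrow{\sigma_0}\alpha_1\xrightarrow{\sigma_1}\cdots$ such that each $\alpha_k|_{Q_k}$ is "stable enough" inside the projected net $T|_{Q_k}$, i.e. its $T|_{Q_k}$-component is finite, while some place in $Q_{k-1}\setminus Q_k$ has been pumped up strictly. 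Since $|P|=d$, this chain has length at most $d$, so after at most $d$ rounds we reach a set $Q=Q_\ell$ on which $\alpha|_Q$ is genuinely $T|_Q$-bottom. The transition between levels is where the real work is: given $\alpha_k$ with $\alpha_k|_{Q_k}$ not yet bottom, there is $\beta$ with $\alpha_k|_{Q_k}\xrightarrow{(T|_{Q_k})^*}\beta$ but $\beta\not\xrightarrow{(T|_{Q_k})^*}\alpha_k|_{Q_k}$; by a standard Petri-net pumping argument (Dickson plus the fact that non-return to a configuration is witnessed by a short coverability-type certificate, cf. Lemma~\ref{lem:rackoff}) one can iterate to make some coordinate grow without bound, then lift this back via Lemma~\ref{lem:large} to the full net $T$ provided the places outside $Q_k$ are large enough — which is exactly why we want them pumped first, in decreasing order of the place-sets.

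Concretely, the key steps in order are: (1) define, for a configuration $\mu$ and a place set $Q$, what it means for $\mu|_Q$ to be "$T|_Q$-bottom" and prove the elementary fact that if it is not bottom then there is a short word witnessing escape (length bounded via Lemma~\ref{lem:rackoff}), and that escape can be iterated to unbounded growth in some coordinate; (2) prove a "lifting" lemma: if a short word acts on $T|_Q$ and all $P\setminus Q$ coordinates of the source exceed $|\sigma|\norm{T}_\infty$, then by Lemma~\ref{lem:large} it lifts to $T$ with controlled loss on $P\setminus Q$ coordinates; (3) set up the iteration: maintain the invariant that the current configuration is reachable from $\rho$ by a short word, that $\alpha|_{Q_k}$ has a finite $T|_{Q_k}$-component of bounded cardinality, and that coordinates in $P\setminus Q_k$ are "as large as we need them for the next lift"; (4) at each step, either $\alpha|_{Q_k}$ is $T|_{Q_k}$-bottom (and we are essentially done, taking $Q=Q_k$ and doing one extra pumping step to produce the final $\alpha\xrightarrow{w}\beta$ with strict increase off $Q$), or we find a coordinate $p\in Q_k$ that pumps, lift that pumping to $T$, first raising all of $P\setminus Q_k$ to the required threshold and then raising $p$; set $Q_{k+1}=Q_k\setminus\{p\}$ (or the set of coordinates that stayed bounded) and recurse; (5) track the bounds: each round multiplies the relevant quantities by a fixed doubly-exponential factor of the form $(\cdots)^{d^d}$, and after at most $d$ rounds this accumulates to the stated bound $b=(4+4\norm{T}_\infty+2\norm{\rho}_\infty)^{d^d(1+(2+d^d)^{d+1})}$ — the exponent $d^d(1+(2+d^d)^{d+1})$ being precisely what one gets from composing $d$ Rackoff bounds of the shape $(\text{size})^{d^d}$ on top of each other.

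The main obstacle I expect is step (4)–(5): the bookkeeping to guarantee that, at the moment we pump a new place $p$, the already-pumped places in $P\setminus Q_k$ are still large enough after the lift to serve as "fuel" for the next round's lift. Each lift via Lemma~\ref{lem:large} costs $|\sigma|\norm{T}_\infty$ on each such coordinate, so one must pump each place not to a fixed target but to a target that anticipates all $d$ future lifts; this is what forces the nested exponentiation $(2+d^d)^{d+1}$ rather than a mere product. A secondary subtlety is making sure the finiteness of the $T|_Q$-component is itself certified by a small bound — here one invokes that a configuration with finite component has all reachable configurations bounded, and the bound can be taken doubly-exponential by a coverability argument applied in each coordinate to detect unboundedness, which is where Lemma~\ref{lem:rackoff} enters for the $|Q|$-place net $T|_Q$. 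Once these two points are pinned down, assembling $\sigma$ (the concatenation of all the lifting words and pumping prefixes), $w$ (the final strict-increase pumping word), $Q$, $\alpha$, $\beta$ and checking each bullet against $b$ is routine.
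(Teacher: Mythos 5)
Your overall pump-or-stabilize scheme, with Lemma~\ref{lem:large} lifts and thresholds chosen to anticipate future lifting costs, is in the spirit of the paper's proof (which iterates a single lemma at most $d$ times), but two of your central claims are false, and they are exactly the points the paper's construction is designed to avoid. First, in your step (1) you assert that if $\alpha_k|_{Q_k}$ is not $T|_{Q_k}$-bottom then the escape ``can be iterated to unbounded growth in some coordinate''. This fails already for $P=\{a,b\}$ with the single transition $(a,b)$: the configuration $a$ is not bottom (the reachable configuration $b$ cannot return), yet the reachability set is $\{a,b\}$ and no coordinate grows at all. Non-bottomness is not a pumping certificate, and likewise covering a large value in a place does not imply that the place can be increased beyond every bound. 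Second, your fallback for certifying the component size --- that a finite component forces all reachable configurations to be bounded doubly-exponentially, detectable ``by a coverability argument applied in each coordinate'' --- is also false: bounded Petri nets can have reachability sets (hence components) of Ackermannian size, so in the bounded case no Rackoff-style argument gives a value bound of the shape $b$. Finally, your descending chain $P=Q_0\supsetneq Q_1\supsetneq\cdots$ has no valid base case: there is no reason the full configuration $\rho$ has a finite, let alone certifiably small, $T$-component.

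The paper goes in the opposite direction and gets the small bounds from an explicit threshold mechanism rather than from any finiteness or unboundedness argument. It builds an increasing chain $\emptyset=Q_0\subset Q_1\subset\cdots$, starting from the trivially bottom empty projection (component of cardinality $1$). In each round (Lemma~\ref{lem:extract}) it fixes in advance a decreasing sequence $\lambda_1\geq\cdots\geq\lambda_d$ and tries to drive the places outside the current $Q$, one after another, above these thresholds. If all succeed, bottomness and the small component $s$ of $\rho|_Q$ give a word of length $<s$ returning the $Q$-part exactly to $\rho|_Q$, and the thresholds were chosen so that after paying the lifting cost every place off $Q$ is still strictly above its initial value; this is where $\alpha(p)<\beta(p)$ comes from, with no unboundedness claim anywhere. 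If instead some subset $R_n$ of the remaining places never exceeds $\lambda_{n+1}$ along any run from the current configuration, then the projection to $Q'=Q\cup R_n$ has a reachable set of cardinality at most $s\lambda_{n+1}^{|R_n|}$ (the $Q$-part stays in the old component, the $R_n$-part stays below the threshold, and a minimal-length argument plus Lemma~\ref{lem:large} lifts short projected runs back to $T$), so a bottom configuration with a certifiably small component is reachable inside it and the recursion continues with the strictly larger $Q'$. This threshold-based certification of the component bound is the idea missing from your proposal; without it, both your ``stabilize'' branch and the cardinality bound $b$ are unsupported.
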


The proof of the previous theorem is obtained by iterating the following lemma in order to obtain an increasing sequence of sets $Q$.
\begin{lemma}\label{lem:extract}
  Let $T$ be a $P$-Petri net, let $\rho$ be a $P$-configuration, let $Q$ be a set of states included in $P$ such that $\rho|_Q$ is $T|_Q$-bottom, let $s$ be the cardinal of the $T|_Q$-component of $\rho|_Q$, and let $d\eqdef |P\backslash Q|$.

  There exist a word $\sigma\in T^*$ such that $|\sigma|\leq (1+d(1+s\norm{T}_\infty+\norm{\rho}_\infty)^{d^d})s$, and a $P$-configuration $\rho'$ such that $\rho\xrightarrow{\sigma}\rho'$ and such that:
  \begin{itemize}
  \item either $\rho'|_Q=\rho|_Q$ and $\rho'(p)>\rho(p)$ for every $p\in P\backslash Q$,
  \item or there exists a set $Q'\subseteq P$ that strictly contains $Q$ such that $\rho'|_{Q'}$ is $T|_{Q'}$-bottom and the cardinal $s'$ of the $T|_{Q'}$-component of $\rho'|_{Q'}$ satisfies:
    $$s'\leq (1+d(1+s\norm{T}_\infty+\norm{\rho}_\infty)^{d^d})s$$
  \end{itemize}
\end{lemma}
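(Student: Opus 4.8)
The plan is to start from the configuration $\rho$ (whose $Q$-part is already $T|_Q$-bottom with a $T|_Q$-component of cardinal $s$) and look at what happens to the $P\backslash Q$ coordinates along executions of $T$. The intuition is that either we can pump strictly up on all of $P\backslash Q$ — giving the first alternative — or some coordinate in $P\backslash Q$ is ``bounded'' along all reachable configurations that stay in the $T|_Q$-component of $\rho|_Q$, and that bounded coordinate should be absorbed into a larger bottom set $Q'$. The first thing I would do is make this dichotomy precise: consider the (finite, of size $s$) $T|_Q$-component $C$ of $\rho|_Q$; for each element $\mu\in C$ and each configuration $\eta$ reachable from $\rho$ with $\eta|_Q=\mu$, the $P\backslash Q$ values are either simultaneously pumpable or not.

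Next I would use a covering/pumping argument in the style of Lemma~\ref{lem:rackoff} (Rackoff's technique) restricted to the ``outside'' coordinates $P\backslash Q$, working inside each of the $s$ cosets. The key point is that $\rho|_Q$ being $T|_Q$-bottom means that any execution of $T$ from $\rho$ either keeps the $Q$-projection inside $C$ (and can be brought back to $\rho|_Q$), or leaves $C$ — and leaving $C$ necessarily decreases something that cannot come back. So I restrict attention to the sub-Petri-net of moves that keep us in the component $C$. On this restricted system I run a Rackoff-style coverability bound on the $d=|P\backslash Q|$ coordinates: either there is a short word (length $\leq (1+s\norm{T}_\infty+\norm{\rho}_\infty)^{d^d}$ per coset, times $s$ cosets, plus an $s$-length ``return to $\rho|_Q$'' prefix, giving the stated bound $(1+d(1+s\norm{T}_\infty+\norm{\rho}_\infty)^{d^d})s$) that strictly increases every coordinate in $P\backslash Q$ while returning the $Q$-projection to $\rho|_Q$ — this is the first bullet — or some coordinate $p\in P\backslash Q$ stays below the Rackoff threshold on every reachable configuration inside $C$. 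In the latter case I set $Q'\eqdef Q\cup\{q : q \text{ is such a bounded coordinate}\}$, and I need to argue $\rho'|_{Q'}$ is $T|_{Q'}$-bottom for an appropriate reached $\rho'$, and bound the cardinal of its $T|_{Q'}$-component.

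The bound on $s'$ comes from the fact that the $T|_{Q'}$-component of $\rho'|_{Q'}$ lives inside (a bounded fattening of) $C\times[0,\text{threshold}]^{Q'\backslash Q}$: there are $s$ choices for the $Q$-part and at most $(1+s\norm{T}_\infty+\norm{\rho}_\infty)^{d^d}$ choices for each newly-added coordinate, which multiplies out to at most $(1+d(1+s\norm{T}_\infty+\norm{\rho}_\infty)^{d^d})s$ after the same kind of crude counting used for the word-length bound; and the short word $\sigma$ reaching $\rho'$ (first driving the $Q$-part back into $C$ and then to the place where the new bottom set crystallizes) obeys the same length bound. The main obstacle I anticipate is the $T|_{Q'}$-bottom claim: I must show that from $\rho'|_{Q'}$ every $T|_{Q'}$-reachable configuration can return, which requires combining the $T|_Q$-bottom hypothesis on the old part with the boundedness of the new coordinates — and in particular ruling out that the newly-added coordinates, although bounded, fail to be ``reversible''. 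This is exactly where one must be careful that the choice of $\rho'$ (and possibly a further short prefix) picks out a genuinely recurrent configuration of the restricted system rather than a merely reachable one; I expect this to be handled by choosing $\rho'$ minimal/recurrent for the restricted dynamics via a standard finiteness argument, analogous to how one extracts a bottom strongly connected component in a finite graph.
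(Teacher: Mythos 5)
Your proposal has the right flavor (a Rackoff-style dichotomy between pumping all of $P\backslash Q$ and folding bounded coordinates into a larger bottom set), but the dichotomy as you state it is not justified and is in fact the crux of the lemma. The negation of ``there is a short word that strictly increases \emph{every} coordinate of $P\backslash Q$ simultaneously while restoring the $Q$-part'' is \emph{not} ``some coordinate of $P\backslash Q$ stays below a threshold on every configuration reachable from $\rho$'': each coordinate could be individually pumpable from $\rho$ without your argument producing a single configuration where all are large at once. The paper proves exactly this simultaneous version by an induction that pumps the coordinates of $P\backslash Q$ one at a time against a decreasing cascade of thresholds $\lambda_1\geq\cdots\geq\lambda_d$ with $\lambda_n=s\lambda_{n+1}^{d-n}\norm{T}_\infty+\lambda_{n+1}$: at stage $n$ one asks whether some not-yet-pumped place can reach height $\lambda_{n+1}$ \emph{from the current configuration $\rho_n$}; if so, a \emph{minimal-length} pumping word is taken, and minimality confines all intermediate configurations to a set of size at most $s\lambda_{n+1}^{d-n}$, which bounds the word length and hence the loss $|\sigma_{n+1}|\norm{T}_\infty$ on the already-pumped places, keeping them above $\lambda_{n+1}$. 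This cascade is where the $(\cdot)^{d^d}$ in the bound comes from; your ``per coset, times $s$ cosets'' accounting neither yields it nor addresses simultaneity.

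The second gap is in your bounded case. You set $Q'\eqdef Q\cup\{\text{bounded coordinates}\}$ with boundedness assessed from $\rho$, leaving the ``unbounded'' coordinates outside $Q'$ unpumped. But to bound the $T|_{Q'}$-reachability set (hence $s'$), to extract a $T|_{Q'}$-bottom configuration by a short word, and even to transfer the boundedness hypothesis (which concerns full-net runs) to runs of the projected net $T|_{Q'}$, one must lift $T|_{Q'}$-words back to $T$-words via Lemma~\ref{lem:large}, and this requires the places \emph{outside} $Q'$ to be large at the current configuration. In the paper this is precisely why the test is made at $\rho_n$, after $p_1,\ldots,p_n$ have been pumped to height $\geq\lambda_n\geq s\lambda_{n+1}^{d-n}\norm{T}_\infty$, and why $Q'$ is $Q$ together with \emph{all} remaining places $R_n$ (so that the complement of $Q'$ consists exactly of the pumped places). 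With your choice of $Q'$ the lifting is unavailable, and boundedness of the new coordinates along full-net runs does not by itself bound them along $T|_{Q'}$-runs, so neither the bound on $s'$ nor the bottomness claim goes through. Your closing remark about extracting a bottom configuration of a finite restricted system is the right idea for the last step (it is what the paper does inside the finite set of size $s\lambda_{n+1}^{d-n}$), but it presupposes the finiteness and lifting that your setup does not provide.
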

\begin{proof}
  Let us introduce the sequence $\lambda_1,\ldots,\lambda_d$ of natural numbers satisfying $\lambda_d\eqdef 1+s\norm{T}_\infty+\norm{\rho}_\infty$ and satisfying $\lambda_n \eqdef s\lambda_{n+1}^{d-n}\norm{T}_\infty+\lambda_{n+1}$ for every $n\in\{1,\ldots,d-1\}$. Observe that $\lambda_1\geq \cdots \geq \lambda_d$. Moreover, $\lambda_n\leq \lambda_d \lambda_{n+1}^{d-n}$ for every $1\leq n<d$. We deduce by induction that $\lambda_1^d\leq \lambda_d^{d^d}$.
  
  Let $\rho_0\eqdef \rho$. We are going to build by induction on $n$ a sequence $\rho_1,\ldots,\rho_n$ of configurations, a sequence $\sigma_1,\ldots,\sigma_n$ of words in $T^*$, and a sequence $p_1,\ldots,p_n$ of states in $P\backslash Q$ such that for every $i\in\{1,\ldots,n\}$ we have:
  \begin{itemize}
  \item[(i)] $\rho_{i-1}\xrightarrow{\sigma_i}\rho_i$.
  \item[(ii)] $|\sigma_i|\leq \lambda_i^{d-i+1} s$.
  \item[(iii)] $\rho_i(p)\geq \lambda_i$ for every $p\in\{p_1,\ldots,p_i\}$.
  \end{itemize}
  So, let us assume that $\rho_1,\ldots,\rho_n$, $\sigma_1,\ldots,\sigma_n$, and $p_1,\ldots,p_n$ are built for some $n\geq 0$. Since $p_1,\ldots,p_n$ are distinct elements in $P\backslash Q$, it follows that $n\leq d$.

  Let us first assume that $n=d$. In that case, we have $\rho_d(p)\geq \lambda_d$ for every $p\in P\backslash Q$. As $\rho|_Q$ is a $T|_Q$-bottom configuration and $\rho|_Q\xrightarrow{(\sigma_1\ldots\sigma_d)|_Q}\rho_d|_Q$ and since the cardinal of the $T|_Q$-component of $\rho|_Q$ is bounded by $s$, we deduce that there exists a word $w\in T^*$ such that $\rho_d|_Q\xrightarrow{w|_Q}\rho|_Q$ and $|w|< s$. Since $\lambda_d\geq s\norm{T}_\infty\geq |w|\norm{T}_\infty$, Lemma~\ref{lem:large} shows that $\rho_d\xrightarrow{w}\rho'$ for some configuration $\rho'$ such that $\rho'|_Q=\rho|_Q$ and such that for every $p\in P\backslash Q$ we have $\rho'(p)\geq \rho(p)-|w|\norm{T}_\infty\geq \lambda_d-s\norm{T}_\infty>\rho(p)$ by definition of $\lambda_d$. Let us introduce $\sigma\eqdef \sigma_1\ldots\sigma_d w$ and notice that $|\sigma|\leq (\lambda_1^d+\cdots+\lambda_d^{d-d+1}+1)s\leq (1+d\lambda_1^d)s$ and we have proved that the lemma holds (first case).

  So we can assume that $n<d$. Let us introduce the set $R_n=(P\backslash Q)\backslash \{p_1,\ldots,p_n\}$. Since $|R_n|=d-n$, the set $R_n$ is non empty.

  Assume first that for every configuration $\beta$ such that $\rho_n\xrightarrow{T^*}\beta$ we have $\beta(p)<\lambda_{n+1}$ for every $p\in R_n$. In that case let $Q'\eqdef Q\cup R_n$. It follows that the cardinal of the set of configurations $\beta'$ such that $\rho_n|_{Q'}\xrightarrow{T|_{Q'}^*}\beta'$ is bounded by $s\lambda_{n+1}^{d-n}$. Hence, there exists a configuration $\beta'$ that is $T|_{Q'}$-bottom and a word $w\in T^*$ such that $\rho_n|_{Q'}\xrightarrow{w|_{Q'}}\beta'$ and such that $|w|< s\lambda_{n+1}^{d-n}$. Notice that the cardinal $s'$ of the $T|_{Q'}$-component of $\beta'$ is bounded by $s\lambda_{n+1}^{d-n}\leq s\lambda_1^d$. As $\rho_n(p)\geq \lambda_n$ for every $p\in\{p_1,\ldots,p_n\}$ and $\lambda_n\geq (s\lambda_{n+1}^{d-n}-1)\norm{T}_\infty\geq |w|\norm{T}_\infty$, Lemma~\ref{lem:large} shows that there exists a configuration $\rho'$ such that $\rho_n\xrightarrow{w}\rho'$, and  $\rho'|_{Q'}=\beta'$. Let us consider the word $\sigma\eqdef \sigma_1\ldots\sigma_n w$. Notice that $|\sigma|\leq (\lambda_1^d+\cdots+\lambda_n^{d-n+1}+\lambda_{n+1}^{d-n}) s\leq d\lambda_1^d s$ and we have proved that the lemma holds (second case).

  Finally, assume that there exists a configuration $\rho_{n+1}$ such that $\rho_n\xrightarrow{\sigma_{n+1}}\rho_{n+1}$ for some word $\sigma_{n+1}\in T^*$ and such that $\rho_{n+1}(p_{n+1})\geq \lambda_{n+1}$ for some state $p_{n+1}\in R_n$. We assume that $|\sigma_{n+1}|$ is minimal. Observe that every intermediate configuration $\beta$ such that $\rho_n\xrightarrow{u}\beta\xrightarrow{v}\rho_{n+1}$ with $u v=\sigma_{n+1}$ and $|v|\geq 1$ satisfies $\beta(p)<\lambda_{n+1}$ for every $p\in R_n$ by minimality of $|\sigma_{n+1}|$. We deduce that there exists a word $w\in T^*$ such that $\rho_n|_{Q\cup R_n}\xrightarrow{w|_{Q\cup R_n}}\rho_{n+1}|_{Q\cup R_n}$ and such that $|w|\leq s\lambda_{n+1}^{d-n}$. As $\rho_n(p)\geq \lambda_n$ for every $p\in \{p_1,\ldots,p_n\}$ and $\lambda_n\geq s\lambda_{n+1}^{d-n}\norm{T}_\infty\geq |w|\norm{T}_\infty$, Lemma~\ref{lem:large} shows that there exists a configuration $\beta$ such that $\rho_n\xrightarrow{w}\beta$ and $\beta|_{Q\cup R_n}=\rho_{n+1}|_{Q\cup R_n}$. In particular $\beta(p_{n+1})=\rho_{n+1}(p_{n+1})\geq \lambda_{n+1}$. By minimality of $|\sigma_{n+1}|$, we get $|\sigma_{n+1}|\leq |w|\leq s\lambda_{n+1}^{d-n}$. Now, observe that for every $p\in\{p_1,\ldots,p_n\}$ we have $\rho_{n+1}(p)\geq \rho_n(p)-|\sigma_{n+1}|\norm{T}_\infty\geq \lambda_n-s\lambda_{n+1}^{d-n}\norm{T}_\infty\geq \lambda_{n+1}$ by definition of $\lambda_n$. We have extended our sequence in such a way $(i)$, $(ii)$, and $(iii)$ are fulfilled.
  
We have proved the lemma.
\end{proof}

Now, let us prove Theorem~\ref{thm:extract}. Observe that if $d=0$ the theorem is trivial. So, we can assume that $d\geq 1$.
  Let $Q_0\eqdef \emptyset$, $\rho_0\eqdef\rho$, and $s_0\eqdef 1$. Notice that $\rho_0|_{Q_0}$ is $T|_{Q_0}$-bottom and the cardinal of the $T|_{Q_0}$-component of $\rho_0|_{Q_0}$ contains $s_0$ elements. We build by induction on $n$ a sequence $Q_1,\ldots,Q_n$ of subsets of $P$, a sequence $\rho_1,\ldots,\rho_n$ of configurations, a sequence $\sigma_1,\ldots,\sigma_n$ of words in $T^*$ such that for every $i\in\{1,\ldots,n\}$:
  \begin{itemize}
  \item $\rho_{i-1}\xrightarrow{\sigma_i}\rho_i$.
  \item $\rho_i|_{Q_i}$ is $T|_{Q_i}$-bottom.
  \item The cardinal of the $T|_{Q_i}$-component of $\rho_i|_{Q_i}$ is equal to $s_i$.
  \item $Q_{i-1}\subset Q_i$.
  \item $|\sigma_i|,s_i\leq (1+d(1+s_{i-1}\norm{T}_\infty+\norm{\rho_{i-1}}_\infty)^{d^{d}})s_{i-1}$.
  \end{itemize}
  Assume the sequence built for some $n\geq 0$. Lemma~\ref{lem:extract} on the configuration $\rho_n$ and the set $Q_n$ shows that there exist a word $\sigma_{n+1}$ such that $|\sigma_{n+1}|\leq (1+d(1+s_{n}\norm{T}_\infty+\norm{\rho_{n}}_\infty)^{d^d})s_n$, and a configuration $\rho_{n+1}$ such that $\rho_n\xrightarrow{\sigma_n}\rho_{n+1}$, such that:
  \begin{itemize}
  \item either $\rho_{n+1}|_{Q_{n}}=\rho_n|_{Q_n}$ and $\rho_{n+1}(p)>\rho_n(p)$ for every $p\in P\backslash Q_n$,
  \item or there exists $Q_{n+1}$ such that $Q_n\subset Q_{n+1}\subseteq P$ such that $\rho_{n+1}|_{Q_{n+1}}$ is $T|_{Q_{n+1}}$-bottom and the cardinal $s_{n+1}$ of its $T|_{Q_{n+1}}$-component satisfies:
    $$s_{n+1}\leq (1+d(1+s_n\norm{T}_\infty+\norm{\rho_n}_\infty)^{d^d})s_n$$
  \end{itemize}
  Observe that in the second case we have extended the sequences. In the first case, let $\alpha\eqdef \rho_n$, $\beta \eqdef \rho_{n+1}$, $\sigma\eqdef \sigma_1\ldots\sigma_n$, $w\eqdef \sigma_{n+1}$, and $Q \eqdef Q_{n}$. Since $Q_0\subset Q_1\cdots \subset Q_n$ are subsets of $P$, we deduce that $n\leq d$. Let us introduce $a=(1+d)(2+2\norm{T}_\infty+\norm{\rho}_\infty)^{d^d}$ and $h=2+d^d$ and let us prove by induction on $i$ that we have $|\sigma_i|,|s_i|\leq a^{h^i}$ and $\norm{\rho_i}_\infty\leq (1+\norm{T}_\infty)a^{h^i}$
  with the convention $\sigma_0=\varepsilon$.

  The rank $i=0$ is immediate. Assume the rank $i-1$ proved. We have:
  \begin{align*}
    |\sigma_i|,s_i
    & \leq (1+d(1+s_{i-1}\norm{T}_\infty+\norm{\rho_{i-1}}_\infty)^{d^{d}})s_{i-1}\\
    & \leq (1+d)(2+2\norm{T}_\infty)^{d^d}a^{h^{i-1}(d^d+1)}\\
    & \leq a^{1+h^{i-1}(h-1)}\\
    &\leq a^{h^i}
  \end{align*}
  Since $\rho_{i-1}\xrightarrow{\sigma_i}\rho_i$, we deduce that $\norm{\rho_i}_\infty\leq \norm{\rho_{i-1}}_\infty+|\sigma_i|\norm{T}_\infty\leq (1+\norm{T}_\infty)a^{h^i}$. The induction is proved.

  It follows that $|\sigma|\leq d a^{h^d}$, $|w|\leq a^{h^{d+1}}$, and $d\norm{\alpha}_\infty,d\norm{\beta}_\infty\leq d(1+\norm{T}_\infty)a^{h^{d+1}}\leq a^{1+h^{d+1}}$. Since $d\geq 1$, we deduce that $(1+d)\leq 2^{d^d}$. In particular $a\leq (4+4\norm{T}_\infty+2\norm{\rho}_\infty)^{d^d}$. We have proved Theorem~\ref{thm:extract}.

\section{Petri Nets with Control-States}\label{sec:PNS}
A \emph{$P$-Petri net with control-states} (or simply a Petri net with control-states when the finite set of states $P$ is clear from the context) is a triple $(S,T,E)$ where $S$ is a non empty finite set of elements called \emph{control-states}, $T$ is a $P$-Petri net, and $E\subseteq S\times T\times S$ is a set of elements called \emph{edges}. The \emph{Parikh image} of a word $\pi=e_1\ldots e_k$ of edges is the mapping $\#\pi\in \setN^E$ defined by $\#\pi(e)=|\{j\in\{1,\ldots,k\} \mid e_j=e\}|$. The \emph{displacement} of a transition $t=(\alpha_t,\beta_t)$ is the function $\Delta(t)\in\setZ^P$ defined by $\Delta(t)(p)=\beta_t(p)-\alpha_t(p)$ for every $p\in P$. The \emph{displacement} of an edge $e=(s,t,s')$ is defined as $\Delta(e)\eqdef\Delta(t)$. The displacement of a word $\pi=e_1\ldots e_k$ of edges is $\Delta(\pi)\eqdef\sum_{1\leq j\leq k}\Delta(e_j)$. We denote by $|\pi|\eqdef k$ the \emph{length} of $\pi$. A \emph{path} $\pi$ from a control-state $s$ to a control-state $s'$ is a word $\pi=e_1\ldots e_k$ of edges in $E$ such that there exists control-states $s_0,\ldots,s_k$ in $S$ and transitions $t_1,\ldots,t_k$ in $T$ such that $s_0=s$, $s_k=s'$, and such that $e_j=(s_{j-1},t_j,s_j)$ for every $1\leq j\leq k$. Such a path is called a \emph{cycle} if $s=s'$. A cycle $\theta$ of a Petri net with control-states is said to be \emph{total} if $\#\theta(e)>0$ for every $e\in E$. The cycle is said to be \emph{simple} if the control-states $s_1,\ldots,s_k$ are distinct. A \emph{multicycle} $\Theta$ is a sequence $\theta_1,\ldots,\theta_k$ of cycles.  We denote by $|\Theta|\eqdef\sum_{j=1}^k|\theta_j|$, the \emph{length} of a multicycle $\Theta$. We introduce the \emph{Parikh image} $\#\Theta\eqdef \sum_{j=1}^k\#\theta_j$ and the \emph{displacement} $\Delta(\Theta)\eqdef\sum_{j=1}^k\Delta(\theta_j)$ of such a multicycle $\Theta$. A multicycle $\Theta$ is said to be \emph{total} if $\#\Theta(e)>0$ for every $e\in E$.

\medskip

A \emph{Petri net with control-states} $(S,T,E)$ is said to be \emph{strongly connected} if for every pair $(s,s')$ of control-states in $S$, there exists a path from $s$ to $s'$.  Let us recall the classical Euler lemma in the context of Petri nets with control-states.
\begin{lemma}[Euler Lemma]\label{lem:euler}
  For every total multicycle $\Theta$ in a strongly connected Petri net with control-states there exists a total cycle $\theta$ such that $\#\theta=\#\Theta$.
\end{lemma}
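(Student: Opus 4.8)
The plan is to recognize this as the classical Eulerian-circuit theorem for directed multigraphs, so I would first give the conceptual argument and then a self-contained induction. Form the directed multigraph $H$ with vertex set $S$ that contains, for each edge $e=(s,t,s')\in E$, exactly $\#\Theta(e)$ parallel arcs from $s$ to $s'$. Each $\theta_j$ is a closed walk in $H$, hence contributes equal in-degree and out-degree at every control-state, so $\#\Theta=\sum_j\#\theta_j$ makes every vertex of $H$ balanced. Moreover $\Theta$ is total, so every arc of the net occurs in $H$ with positive multiplicity; since $(S,T,E)$ is strongly connected and every control-state is incident to an edge (treating $E=\emptyset$ trivially), all vertices of $H$ carry positive degree and lie in one connected component. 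The classical Euler theorem then yields an Eulerian circuit of $H$, i.e. a closed walk $\theta$ traversing each arc exactly $\#\Theta(e)$ times; reading $\theta$ back as a word of edges of $E$ gives a cycle with $\#\theta=\#\Theta$, which is total because $\Theta$ is.

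For a self-contained proof I would instead induct on the number $k$ of cycles composing $\Theta$. For $k=0$ totality forces $E=\emptyset$, so $\#\Theta=\vec 0$ and the empty cycle at any control-state works; for $k=1$ the single cycle $\theta_1$ already has $\#\theta_1=\#\Theta$ and is total. For $k\ge 2$ the heart of the step is to exhibit two cycles $\theta_j,\theta_{j'}$ with $j\ne j'$ visiting a common control-state $s$. Once this is available, write $\theta_j=\pi_1\pi_2$ where $\pi_1$ runs from the basepoint of $\theta_j$ to the first occurrence of $s$ and $\pi_2$ is the remainder, so $\pi_2\pi_1$ is a cycle at $s$ with $\#(\pi_2\pi_1)=\#\theta_j$; do the same for $\theta_{j'}$ to get a cycle $\tau$ at $s$ with $\#\tau=\#\theta_{j'}$; then $(\pi_2\pi_1)\tau$ is a single cycle at $s$ with Parikh image $\#\theta_j+\#\theta_{j'}$. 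Replacing $\theta_j,\theta_{j'}$ in $\Theta$ by this cycle produces a total multicycle of length $k-1$ with the same Parikh image, and the induction hypothesis concludes.

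The only non-routine point, and the step I expect to be the main obstacle, is proving that when $k\ge 2$ two of the cycles share a control-state. I would argue this through the auxiliary graph $G$ on $\{1,\dots,k\}$ that joins $j$ and $j'$ whenever $\theta_j$ and $\theta_{j'}$ visit a common control-state, and show $G$ is connected (hence, for $k\ge 2$, has an edge). First, since $(S,T,E)$ is strongly connected every control-state is incident to an edge (this is where one uses that $S$ is nonempty and, if $|S|\ge 2$, that there is a path to a distinct control-state); combined with totality of $\Theta$, every control-state is visited by some $\theta_j$. Now suppose $G$ were disconnected: split $\{1,\dots,k\}$ into nonempty $A,B$ with no connecting $G$-edge, and let $S_A,S_B$ be the control-states visited by the cycles of $A$, resp. of $B$. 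Then $S_A,S_B$ are disjoint and their union is all of $S$, and both are nonempty, so by strong connectivity (take any path from $S_A$ to $S_B$ and its first edge leaving $S_A$) there is an edge $e\in E$ with source in $S_A$ and target in $S_B$. By totality $e$ occurs in some $\theta_j$, and that $\theta_j$ visits both endpoints of $e$; whichever of $A,B$ contains $j$, this forces a common element of $S_A$ and $S_B$, contradicting their disjointness. Hence $G$ is connected, giving the required shared control-state, and this is precisely where the two hypotheses of the lemma — strong connectedness of the net and totality of $\Theta$ — are both consumed.
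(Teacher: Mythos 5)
Your proof is correct: the paper itself gives no argument for this lemma, invoking it as the classical Euler theorem, and your first paragraph (balanced connected multigraph built from $\#\Theta$, Eulerian circuit) is exactly that classical reduction, while your splicing induction is a sound self-contained alternative. The only nitpick is the degenerate case of empty cycles in the multicycle (which could make $S_A$ or $S_B$ empty in your connectivity argument); discarding them first, since they have zero Parikh image, settles it.
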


We deduce the following lemma.
\begin{lemma}\label{lem:total}
For any strongly connected Petri net with control-states $(S,T,E)$, there exists a total cycle $\theta$ with a length bounded by $|E||S|$.
\end{lemma}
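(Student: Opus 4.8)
The plan is to build a total multicycle of length at most $|E||S|$ and then apply the Euler Lemma (Lemma~\ref{lem:euler}) to collapse it into a single total cycle of the same length.

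First I would handle one edge at a time. Fix an edge $e=(s,t,s')\in E$. Strong connectivity gives a path from $s'$ back to $s$; taking one of minimal length, say $\pi_e$, a standard argument shows that the control-states visited along $\pi_e$ are pairwise distinct, so $|\pi_e|\leq |S|-1$ (with $\pi_e$ empty in case $s=s'$). Then $\theta_e\eqdef e\,\pi_e$ is a cycle based at $s$ with $|\theta_e|\leq |S|$ and $\#\theta_e(e)\geq 1$.

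Next I would assemble the multicycle $\Theta$ consisting of $\theta_e$ for all $e\in E$, listed in any order. Since $\#\Theta(e)\geq \#\theta_e(e)\geq 1$ for every $e\in E$, the multicycle $\Theta$ is total, and $|\Theta|=\sum_{e\in E}|\theta_e|\leq |E||S|$.

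Finally, the Euler Lemma applied to the total multicycle $\Theta$ in the strongly connected Petri net with control-states $(S,T,E)$ yields a total cycle $\theta$ with $\#\theta=\#\Theta$; since the length of a word of edges equals the sum of the entries of its Parikh image, $|\theta|=\sum_{e\in E}\#\theta(e)=\sum_{e\in E}\#\Theta(e)=|\Theta|\leq |E||S|$, as desired. I do not expect a real obstacle here: the only point requiring (routine) care is the shortest-path bound $|\pi_e|\leq |S|-1$, and the rest is direct bookkeeping with Parikh images.
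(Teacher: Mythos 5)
Your proof is correct and follows essentially the same route as the paper: one short cycle per edge (the paper invokes a simple cycle containing the edge, you build it explicitly as the edge plus a shortest return path, which gives the same $\leq |S|$ bound), then combine into a total multicycle and apply the Euler Lemma~\ref{lem:euler}. No issues.
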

\begin{proof}
  Every edge $e\in E$ occurs in at least one simple cycle $\theta_e$. It follows that the multicycle $\Theta=(\theta_e)_{e\in E}$ is total. From Lemma~\ref{lem:euler} we deduce that there exists a total cycle $\theta$ such that $\#\theta=\#\Theta$. Notice that $|\theta|=\sum_{e\in E}|\theta_e|\leq |E||S|$.
\end{proof}

A mapping $a\in\setZ^P$ is called a \emph{$P$-action} (or simply an action if $P$ is clear from the context). Notice that displacements of transitions, edges, paths, and multicyles are actions. We associate with an action $a$ the value $\norm{a}_1\eqdef\sum_{p\in P}|a(p)|$. Given a finite set $Q$, we denote by $a|_Q$ the action defined for every $q\in Q$ by $a|_Q(q)\eqdef a(q)$ if $q\in P$, and zero otherwise. 

\begin{lemma}\label{lem:multi}
  Let $\Theta$ be a multicycle of a $P$-Petri net with control-states $(S,T,E)$ with $\norm{T}_\infty>0$, let $Q\subseteq P$, let $d\eqdef |P|$, and let $k> \norm{\Delta(\Theta)|_Q}_1(1+2|S|\norm{T}_\infty)^{d(d+1)}$.

  There exists a multicycle $\Theta'$ such that:
  \begin{itemize}
  \item For every $p\in P$ we have:
    \begin{itemize}
    \item $\Delta(\Theta')(p)\leq 0$ if $\Delta(\Theta)(p)\leq 0$.
    \item $\Delta(\Theta')(p)<0$ if $\Delta(\Theta)(p)\leq -k $.
    \item $\Delta(\Theta')(p)\geq 0$ if $\Delta(\Theta)(p)\geq 0$.
    \item $\Delta(\Theta')(p)>0$ if $\Delta(\Theta)(p)\geq k$.
    \end{itemize}
  \item For every $q\in Q$ we have $\Delta(\Theta')(q)=0$.
  \item For every edge $e\in E$ we have $\#\Theta'(e)>0$ if $\#\Theta(e)\geq k$.
  \item $|\Theta'|\leq (|E|+d)(1+2|S|\norm{T}_\infty)^{d(d+1)}$.
  \end{itemize}
\end{lemma}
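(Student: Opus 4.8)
The natural approach is to encode the existence of a suitable $\Theta'$ as the solvability of a system of linear (in)equalities over $\setN$, and then invoke Pottier's bounds on minimal solutions to extract a small solution. First I would set up the variables: one variable $x_e \in \setN$ for each edge $e \in E$, together with auxiliary slack variables to handle the sign constraints. The unknown multicycle $\Theta'$ is represented through its Parikh image $\#\Theta' \in \setN^E$; to guarantee that a vector in $\setN^E$ actually comes from a multicycle one imposes the Kirchhoff/flow conservation constraints at every control-state (for each $s\in S$, the incoming count equals the outgoing count), which are homogeneous linear equations. The displacement $\Delta(\Theta')$ is then the linear image $p \mapsto \sum_{e} x_e\, \Delta(e)(p)$, so all the sign requirements on $\Delta(\Theta')(p)$ ($\leq 0$, $\geq 0$, $=0$ on $Q$) are linear constraints in the $x_e$. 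The requirement $\#\Theta'(e) > 0$ for edges with $\#\Theta(e) \geq k$ is handled by adding, for each such edge, an equation $x_e = 1 + y_e$ with a fresh $\setN$-valued slack $y_e$; similarly strict displacement inequalities $\Delta(\Theta')(p) < 0$ (resp. $> 0$) are rewritten as $\sum_e x_e \Delta(e)(p) = -1 - z_p$ (resp. $= 1 + z_p$) with fresh slacks. This yields a homogeneous-plus-inhomogeneous linear system; $\Theta$ itself (or a suitable positive scaling, which exists because $\#\Theta(e)\geq k$ and $\Delta(\Theta)(p)$ has the right sign whenever $|\Delta(\Theta)(p)|\geq k$) witnesses that the system is solvable, so a solution exists.

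The next step is to bound the coefficients of the system so that Pottier's lemma gives the stated length bound. The flow-conservation equations have coefficients in $\{-1,0,1\}$; the displacement constraints have coefficients $\Delta(e)(p)$ bounded by $\norm{T}_\infty$ in absolute value, and the inhomogeneous right-hand sides are $1$. By Pottier's estimate, the system admits a solution whose $\ell_1$-norm is bounded by a polynomial in the number of variables times the maximal coefficient, raised to the number of equations — which is of order $(1 + \text{(size of coefficients)})^{O(\text{number of rows})}$. Counting: there are $|E|$ edge variables plus $O(|E| + d)$ slacks, $|S|$ flow equations, at most $d$ displacement equations/inequalities, and $|E|$ "positivity" equations, so the exponent is $O(d(d+1))$ after absorbing $|S|$ and $|E|$ into the base; with the maximal coefficient $\leq \norm{T}_\infty$ one gets the claimed bound $(|E|+d)(1 + 2|S|\norm{T}_\infty)^{d(d+1)}$. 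The displacement-sign conditions ``$\Delta(\Theta')(p)\leq 0$ if $\Delta(\Theta)(p)\leq 0$'' etc. are monotone consequences of how we set up the inequalities, and ``$=0$ on $Q$'' is forced by equality constraints; the factor $\norm{\Delta(\Theta)|_Q}_1$ and the threshold $k$ enter only to certify solvability of the original (unbounded) system, not the final bound, which is the point of the hypothesis $k > \norm{\Delta(\Theta)|_Q}_1(1+2|S|\norm{T}_\infty)^{d(d+1)}$.

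Finally, from the small solution vector $x \in \setN^E$ satisfying the flow equations I would recover an actual multicycle $\Theta'$: a nonnegative integer circulation on the edge-graph decomposes into simple cycles (an Euler-type argument, as in Lemma~\ref{lem:euler} and Lemma~\ref{lem:total}), and collecting these cycles into a sequence gives $\Theta'$ with $\#\Theta' = x$ and $|\Theta'| = \norm{x}_1$, hence $|\Theta'|$ bounded as required; all the displacement and positivity properties transfer verbatim since they depend only on $\#\Theta' = x$. The main obstacle I anticipate is bookkeeping: getting the exponent in Pottier's bound to come out exactly $d(d+1)$ rather than something slightly larger requires being careful about which constraints are equalities versus inequalities (each inequality costs a slack variable, i.e. a column, which is cheap, whereas each equality is a row, which is what sits in the exponent) and about folding the $|S|$ flow equations and the strictness slacks into the base $1+2|S|\norm{T}_\infty$ rather than the exponent. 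A secondary subtlety is ensuring the original system is genuinely solvable over $\setN$ (not just $\setQ_{\geq 0}$): scaling $\Theta$ up by a large factor handles the displacement signs, and the threshold $k$ on $\#\Theta(e)$ handles the positivity requirements, so a positive multiple of $\#\Theta$ works as an initial feasible point.
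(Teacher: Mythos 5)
There is a genuine gap at the heart of your plan: the feasibility of the linear system you construct. Your system contains the equations $\sum_{e}x_e\,\Delta(e)(q)=0$ for every $q\in Q$, and you propose to certify solvability by taking a positive multiple of $\#\Theta$ as the feasible point. But $\Delta(\Theta)|_Q$ need not be zero --- the lemma only assumes it is small relative to $k$ --- and no positive scaling turns a nonzero value into zero, so $m\cdot\#\Theta$ violates exactly these equations for every $m\geq 1$. Exhibiting \emph{any} multicycle that is null on $Q$ while still hitting every edge with $\#\Theta(e)\geq k$ and respecting the signs on places with $|\Delta(\Theta)(p)|\geq k$ is the entire content of the lemma, and it is precisely where the hypothesis $k>\norm{\Delta(\Theta)|_Q}_1(1+2|S|\norm{T}_\infty)^{d(d+1)}$ must do real work; your remark that this hypothesis ``enters only to certify solvability'' concedes the point but supplies no argument. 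The paper's proof runs in the opposite direction: it applies Pottier not to a system describing the unknown $\Theta'$, but to \emph{decompose the given} $\Theta$. Concretely, $\Theta$ is split into simple cycles; the pair $(f,g)$, with $f=|\Delta(\Theta)|$ and $g$ counting simple cycles by displacement, solves a homogeneous system with only $d=|P|$ equations over variables indexed by $P$ and by the set $A$ of simple-cycle displacements; Pottier writes $(f,g)$ as a finite sum of solutions each of $\ell_1$-norm at most $(1+2|S|\norm{T}_\infty)^{d(d+1)}$. At most $\norm{\Delta(\Theta)|_Q}_1$ of these summands have a nonzero component on $Q$, so by the choice of $k$, for each edge with $\#\Theta(e)\geq k$ and each place with $|\Delta(\Theta)(p)|\geq k$ some summand vanishing on $Q$ must contribute; summing one such summand per large edge and place yields $\Theta'$ with all the stated properties, and the prefactor $|E|+d$ in the length bound counts exactly these chosen summands. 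This counting step has no counterpart in your proposal.

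There is also a quantitative obstruction even if feasibility were granted: Pottier-type bounds have the number of equations (rows) in the exponent, and your encoding carries $|S|$ flow-conservation rows plus up to $|E|$ rows of the form $x_e=1+y_e$, on top of the at most $d$ displacement rows. These cannot be ``folded into the base'', so your route would give an exponent of order $|S|+|E|+d$, not $d(d+1)$. The paper avoids this by using only $d$ equations and pushing the large quantity $|A|\leq(1+2|S|\norm{T}_\infty)^d$ into the base through $\sum_{a\in A}\norm{a}_\infty$, and by enforcing the strict positivity requirements not as constraints of the system but through the selection of summands in the decomposition.
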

\begin{proof}
  Since every cycle can be decomposed into a sequence of simple cycles without changing the Parikh image, we can assume without loss of generality that $\Theta$ is a sequence of simple cycles. We introduce the set $A$ of actions $\Delta(\theta)$ where $\theta$ ranges over the simple cycles, and $n\eqdef|A|$ its cardinal. Notice that for every $a\in A$ and for every $p\in P$, we have $|a(p)|\leq |S|\norm{T}_\infty$. It follows that $n\leq (1+2|S|\norm{T}_\infty)^d$.
  
  We denote by $s$ the sign function of $a$ formally defined by $s(\vr{c})\eqdef 1$ if $\Delta(\Theta)(\vr{c})\geq 0$, $s(\vr{c})\eqdef -1$ otherwise. We also introduce the $P$-configuration $f$ defined by $f(p)\eqdef |\Delta(\Theta)(p)|$ for every $p\in P$, and the function $g:A\rightarrow\setN$ such that $g(a)$ is the number of simple cycle $\theta$ that occurs in $\Theta$ such that $\Delta(\theta)=a$.
  
  Notice that $s(p)f(p)=\sum_{a\in A}g(a)a(p)$ for every $p\in P$. We introduce the following linear system over the free variables $(\alpha,\beta)\in\setN^P\times\setN^A$:
  \begin{equation}\label{eq:system}
    \bigwedge_{p\in P}s(p)\alpha(p)=\sum_{a\in A}\beta(a)a(p)
  \end{equation}
  Notice that $(f,g)$ is a solution of that system. From~\cite{Pottier:1991:MSL:647192.720494}, there exists a finite set $H$ of solutions $(\alpha,\beta)$ of that system such that $(f,g)=\sum_{(\alpha,\beta)\in H}(\alpha,\beta)$ and such that for every $(\alpha,\beta)\in H$, we have $\norm{\alpha}_1+\norm{\beta}_1\leq (2+\sum_{a\in A}\norm{a}_\infty)^d$. As $\sum_{a\in A}\norm{a}_\infty\leq (1+2|S|\norm{T}_\infty)^d|S|\norm{T}_\infty$ we deduce (by using $|S|\norm{T}_\infty\geq 1$):
  \begin{equation}\label{eq:bound}
    \norm{\alpha}_1+\norm{\beta}_1\leq (1+2|S|\norm{T}_\infty)^{d(d+1)}
  \end{equation}
  We introduce the set $H_0$ of pairs $(\alpha,\beta)\in H$ such that $\alpha(q)=0$ for every $q\in Q$. Observe that $\sum_{q\in Q}\sum_{(\alpha,\beta)\in H}\alpha(q)$ is equal to $\sum_{q\in Q}|\Delta(\Theta)(q)|=\norm{\Delta(\Delta)|_Q}_1$, and it is also equals to $\sum_{q\in Q}\sum_{(\alpha,\beta)\in H\backslash H_0}\alpha(q)\geq |H\backslash H_0|$. In particular we have:
  $$|H\backslash H_0|\leq \norm{\Delta(\Delta)|_Q}_1$$

  We introduce the set $F$ of edges $e\in E$ such that $\Theta(e)\geq k$. Let $e\in F$. The sum $\sum_{(\alpha,\beta)\in H}\beta(e)$ is equals to $\#\Theta(e)$ and it is also equals to $\sum_{(\alpha,\beta)\in H\backslash H_0}\beta(e)+\sum_{(\alpha,\beta)\in H_0}\beta(e)$. As $\sum_{(\alpha,\beta)\in H\backslash H_0}\beta(e)\leq |H\backslash H_0|(1+2|S|\norm{T}_\infty)^{d(d+1)}$ we deduce that $\sum_{(\alpha,\beta)\in H_0}\beta(e)>0$. In particular there exists $(\alpha,\beta)\in H_0$ such that $\beta(e)>0$.

  We also introduce the set $R$ of $p\in P$ such that $|\Delta(\Theta)(p)|\geq k$. Let $p\in R$. The sum $\sum_{(\alpha,\beta)\in H}\alpha(p)$ is equals to $|\Delta(\Theta)(p)|$ and it is also equals to $\sum_{(\alpha,\beta)\in H\backslash H_0}\alpha(p)+\sum_{(\alpha,\beta)\in H_0}\alpha(p)$. As $\sum_{(\alpha,\beta)\in H\backslash H_0}\alpha(p)\leq |H\backslash H_0|(1+2|S|\norm{T}_\infty)^{d(d+1)}$ we deduce that $\sum_{(\alpha,\beta)\in H_0}\alpha(p)>0$. In particular there exists $(\alpha,\beta)\in H_0$ such that $\alpha(p)>0$.

  Now, let us introduce for each $e\in F$ a pair $(\alpha_e,\beta_e)\in H_0$ such that $\beta_e(e)>0$, and let us introduce for each $p\in R$ a pair $(\alpha_p,\beta_p)\in H_0$ such that $\alpha_p(p)>0$. Let us introduce  $(\alpha',\beta')\eqdef \sum_{e\in F}(\alpha_e,\beta_e)+\sum_{p\in R}(\alpha_p,\beta_p)$ and observe that $\alpha'(e)>0$ for every $e\in F$, $\beta'(p)>0$ for every $p\in R$, and $\beta'(q)=0$ for every $q\in Q$. Moreover, since $(\alpha',\beta')$ is a solution of (\ref{eq:system}), it follows that there exists a multicycle $\Theta'$ such that $\#\Theta'=\beta'$. In particular $\Delta(\Theta')=\Delta(\beta')$. Notice that $\Delta(\Theta')=\alpha'$, and $|\Theta'|=\norm{\beta'}_1\leq (|F|+|R|)(1+2|S|\norm{T}_\infty)^{d(d+1)}\leq (|E|+d)(1+2|S|\norm{T}_\infty)^{d(d+1)}$ and we have proved the lemma.
\end{proof}

\section{Proof of Theorem~\ref{thm:main}}\label{sec:main}
In this section we provide a proof of Theorem~\ref{thm:main}.


\medskip

\newcommand{\Px}{{P'}}

We consider a finite interaction-width protocols $(P,\xrightarrow{*},\rho_L,I,\gamma)$ that stably computes the $(n>i)$ predicate. Notice that $I=\{i\}$. We introduce the Petri net $T\eqdef\{t\in \xrightarrow{*} \mid \com{t}\leq \com{\xrightarrow{*}}\}$. Let us recall that the additive preorder $\xrightarrow{*}$ is equal to $\xrightarrow{T^*}$. Let $d\eqdef|Q|$ and $F=\gamma^{-1}(\{0\})$. Notice that if $d=1$ then $n=1$ and the proof of the theorem is done in that case. So, we can assume that $d\geq 2$. We introduce the following numbers:
\begin{align*}
  b&\eqdef(4+4\norm{T}_\infty+2\norm{\rho_L}_\infty)^{(d-1)^{d-1}(1+(2+(d-1)^{d-1})^d)}\\
  h&\eqdef d(1+\norm{T}_\infty)b\\
  k&\eqdef d h^{d^2+d+1}\\
  a&\eqdef h^{2d+3}\\
  \ell&\eqdef h^{5d^2}\\
  r&\eqdef 2(d-1)^{d-1}(1+(2+(d-1)^{d-1})^d)(5d^2+2d+4)
\end{align*}

\medskip

We introduce $\Px\eqdef P\backslash I$. It follows that $|\Px|= d-1$. Theorem~\ref{thm:extract} applied on the Petri net $T|_\Px$ and the configuration $\rho_L|_\Px$ shows that there exist two words $\sigma,w\in T^*$, a set $Q\subseteq \Px$, and two configurations $\alpha,\beta$ such that:
\begin{itemize}
\item $\rho_L|_\Px\xrightarrow{\sigma|_\Px}\alpha\xrightarrow{w|_\Px}\beta$.
\item $\alpha|_Q=\beta|_Q$.
\item $\alpha(p)<\beta(p)$ for every $p\in \Px\backslash Q$
\item $\alpha|_Q$ is $T|_Q$-bottom.
\item The cardinal of the $T|_Q$-component of $\alpha|_Q$ is bounded by $b$.
\item $|\sigma|,|w|,d\norm{\alpha}_\infty,d\norm{\beta}_\infty\leq b$.
\end{itemize}
Notice that $|T|\leq (1+2\norm{T}_\infty)^{2d}\leq h^{2d}$.

\medskip

We introduce the Petri net with control-states $(S,T,E)$ where $S$ is the $T|_Q$-component of $\alpha|_Q$, and $E$ is the set of edges $(s,t,s')\in S\times T\times S$ such that $s\xrightarrow{t|_Q}s'$. Observe that $|E|\leq |S||T|$ since for every $(s,t,s')$ in $E$ the value of $s'$ is determined by the pair $(s,t)$. It follows that we have:
$$|E|\leq h^{2d+1}$$

Lemma~\ref{lem:total} shows that there exists a total cycle $\theta_E$ of $(S,T,E)$ with a length bounded by $|S||E|$. Without loss of generality we can assume that this total cycle is on the control-state $\alpha|_Q$ by considering a rotation of that cycle. We denote by $\sigma_E$ the label in $T^*$ of this total cycle. Observe that $\norm{T}_\infty|\sigma_E|\leq a$.

Since $\alpha\xrightarrow{w|_\Px}\beta$, $\alpha|_Q=\beta|_Q$, and $\alpha(p)<\beta(p)$ for every $p\in \Px\backslash Q$, we deduce that there exists a configuration $\eta$ such that $\eta(p)\geq a\ell$ for every $p\in \Px\backslash Q$, such that $\alpha|_Q=\eta|_Q$, and such that:
$$\alpha\xrightarrow{w^{a \ell}|_\Px}\eta$$

Moreover, since $\sigma_E$ is the label of a cycle on $\alpha|_Q$ we deduce that $\alpha|_Q\xrightarrow{\sigma_E|_{Q}}\alpha|_Q$. From  $\eta|_Q=\alpha|_Q$ it follows that $\eta|_Q\xrightarrow{\sigma_E^\ell|_{Q}}\alpha|_Q$. As $\eta(p)\geq a\ell\geq \norm{T}_\infty|\sigma_E^\ell|$ for every $p\in \Px\backslash Q$, Lemma~\ref{lem:large} shows that there exists a $P'$-configuration $\delta$ such that $\delta|_Q=\alpha|_Q$ and such that:
$$\eta\xrightarrow{\sigma_E^\ell|_\Px}\delta$$
Observe that $|\sigma w^{a \ell}\sigma_E^\ell|\norm{T}_\infty\leq (b+b a \ell)\norm{T}_\infty+a\ell\leq 2b a\ell(\norm{T}_\infty+1)\leq a\ell h\leq h^{2d+4}\ell$.

\medskip

Assume by contradiction that $n>h^{2d+4}\ell$, and let us introduce the configuration $\rho'$ defined by $\rho'\eqdef \rho_L+(n-1).i$ where $i$ is the state such that $I=\{i\}$. Lemma~\ref{lem:large} shows that there exist $P$-configurations $\alpha',\eta',\delta'$ such that $\alpha'|_\Px=\alpha$, $\eta'|_\Px=\eta$, $\delta'|_\Px=\delta$ and such that:
$$\rho'\xrightarrow{\sigma}\alpha'\xrightarrow{w^{\alpha \ell}}\eta'\xrightarrow{\sigma_E^\ell}\delta'$$
Since the population protocol is stably computing the $(i\geq n)$ predicate and $n-1<n$, there exists a $0$-output stable configuration $\mu$ and a word $\sigma'\in T^*$ such that $\delta'\xrightarrow{\sigma'}\mu$. Lemma~\ref{lem:stable} shows that $\mu$ is $(T,F)$-stabilized.
Observe that $w^{\alpha\ell}\sigma_E^\ell\sigma'$ is the label of a path of $(S,T,E)$ from $\alpha|_Q$ to $\mu|_Q$. It follows that the Parikh image of that path can be decomposed as the Parikh image of a multicycle $\Theta$ and the Parikh image of an elementary path $\pi$. Observe $\Delta(\Theta)+\Delta(\pi)=\Delta(w^{\alpha\ell}\sigma_E^\ell\sigma')=\mu-\alpha'$. Notice that $\#\Theta(e)\geq \ell$ for every $e\in E$ since $\sigma_E$ is the label of a total cycle on $\alpha|_Q$. Since $\pi$ is an elementary path, we deduce that $\norm{\Delta(\pi)}_1\leq d|S|\norm{T}_\infty\leq db\norm{T}_\infty\leq h-db$.

\medskip

We introduce the set $R\eqdef\{p\in P \mid \mu(p)< h\}$. Since $h\geq \norm{T}_\infty(1+\norm{T}_\infty)^{d^d}$, Lemma~\ref{lem:basis} shows that every configuration $\mu'$ such that $\mu|_R=\mu'|_R$ is $(T,F)$-stabilized. Observe that if $i\not\in R$ then $\mu+i$ is $(T,F)$-stabilized, and by additivity, we deduce that $\rho_L+n.i\xrightarrow{T^*}\mu+i$. Since $\mu+i$ is $(T,F)$-stabilized, this configuration cannot reach a $1$-output stable configuration. In particular the protocol is not stably computing the $(i\geq n)$ predicate and we get a contradiction. It follows that $i\in R$.

\medskip

We introduce $R'\eqdef R\backslash I$. Since $d\norm{\alpha}_\infty\leq b$ and $\alpha'|_\Px=\alpha|_\Px$, we deduce that $d\norm{\alpha'|_{R'}}_\infty\leq b$. From $\Delta(\Theta)=\mu-\alpha'-\Delta(\pi)$ we deduce:
$$\norm{\Delta(\Theta)|_{R'}}_1
\leq (d-1)h + b+  h-db
\leq d h$$
As $1+2|S|\norm{T}_\infty\leq 1+h-2b< h$, we deduce that $k> \norm{\Delta(\Theta)|_{R'}}_1(1+2|S|\norm{T}_\infty)^{d(d+1)}$, Lemma~\ref{lem:multi} shows that there exists a multicycle $\Theta'$ such that:
\begin{itemize}
\item For every $p\in P$ we have:
  \begin{itemize}
  \item $\Delta(\Theta')(p)\leq 0$ if $\Delta(\Theta)(p)\leq 0$.
  \item $\Delta(\Theta')(p)<0$ if $\Delta(\Theta)(p)\leq -k $.
  \item $\Delta(\Theta')(p)\geq 0$ if $\Delta(\Theta)(p)\geq 0$.
  \item $\Delta(\Theta')(p)>0$ if $\Delta(\Theta)(p)\geq k$.
  \end{itemize}
\item For every $p\in R'$ we have $\Delta(\Theta')(p)=0$.
\item For every $e\in E$ we have $\#\Theta'(e)>0$ if $\#\Theta(e)\geq k$.
\item $\norm{\Theta'}_1\leq (|E|+d)(1+2|S|\norm{T}_\infty)^{d(d+1)}$
\end{itemize}

Let $m\eqdef-\Delta(\Theta')(i)$ and let us prove that $m>0$. We have $\Delta(\Theta)(i)=\mu(i)-\alpha'(i)-\Delta(\pi)(i)$. Since $i\in R$, we get $\mu(i)< h$. Since $\rho_L+(n-1).i\xrightarrow{\sigma}\alpha'$, we deduce that $\alpha'(i)=\rho_L(i)+(n-1)-\Delta(\sigma)(i)\geq n - h$ since $|\sigma|\leq b$. We deduce that $\Delta(\Theta)(i)< h-n+h+h\leq 3h-n\leq -k$. Hence $\Delta(\Theta')(i)<0$. It follows that $m>0$.

Let $\eta\eqdef m.i+\Delta(\Theta')$. Notice that $\eta(i)=0$ and $\eta(p)=0$ for every $p\in R'$. In particular $\eta(p)=0$ for every $p\in R$. Let us prove that $\eta$ is a configuration. For every  $p \in P\backslash R$ we have $\Delta(\Theta)(p)=\mu(p)-\eta'(p)-\Delta(\pi)(p)\geq db\norm{T}_\infty+b-b-db\norm{T}_\infty\geq 0$. It follows that $\Delta(\Theta')(p)\geq 0$. In particular $\eta(p)\geq 0$. We have proved that $\eta$ is a configuration.

Finally, observe that $\#\Theta(e)\geq \ell\geq k$ for every $e\in E$. In particular $\#\Theta'(e)>0$. Lemma~\ref{lem:euler} shows that $\#\Theta'$ is the Parikh image of a cycle $\theta$ on $x|_Q$. Let $u$ be the label of that cycle. Since $|u|=\norm{\Theta'}_1$, we deduce that:
\begin{align*}
  |u|\norm{T}_\infty
  &\leq \norm{T}_\infty(|E|+d)(1+2b\norm{T}_\infty)^{d(d+1)}\\
  &\leq d(1+\norm{T}_\infty)^{2d}(1+2b\norm{T}_\infty)^{d^2+d+1}\\
  &\leq \ell
\end{align*}

Lemma~\ref{lem:large} shows that:
$$\eta'+m.i\xrightarrow{u}\eta'+\eta$$
We have proved:
$$\rho_L+(n-1+m).i\xrightarrow{\sigma w^{a \ell}u\sigma_E^\ell\sigma'}\mu+\eta$$
Since $(\mu+\eta)|_R=\mu|_R$ we deduce that $\mu+\eta$ is $(T,F)$-stabilized. It follows that this configuration cannot reach a $1$-output stable configuration. In particular the protocol is not stably computing the $(i\geq n)$ predicate and we get a contradiction. It follows that $n\leq h^{2d+4}\ell=h^{5d^2+2d+4}$.

\medskip

Notice that $d(1+\norm{T}_\infty)\leq 2^d(1+\norm{T}_\infty)^d\leq b$. Thus $h\leq b^2$. We deduce that $n\leq (4+4\norm{T}_\infty+2\norm{\rho_L}_\infty)^r$. Since $d\geq 2$, we deduce that $d^d=((d-1)+1)^d\geq (d-1)^d+d(d-1)^{d-1}+1\geq (d-1)^{d-1}+2+1$. Hence $1+(2+ (d-1)^{d-1})^d\leq 1+(d^d-1)^d\leq d^{d^2}$. Moreover, $2(d-1)^{d-1}\leq d^d$. Notice that $2d\leq d^2$ and $4\leq d^2$. Hence $5d^2+2d+4\leq 7d^2\leq d^5$ since $7\leq d^3$. We deduce that $r$ is bounded by $d^{d^2+d+3}$. As $d^2+d+3\leq (d+2)^2$, we get $r\leq d^{(d+2)^2}$.
\medskip

We have proved Theorem~\ref{thm:main} just by observing that $\norm{\rho_L}_\infty\leq |\rho_L|$ and $\norm{T}_\infty\leq \com{\xrightarrow{*}}$.

\section{Conclusion}\label{sec:conc}
This paper introduces protocols that allow agents destructions/creations and leaders. Our definition of stably computing is a straight-forward extension of the one introduced by Dana Angluin, James Aspnes, and David Eisenstat in~\cite{DBLP:conf/podc/AngluinAE06}.

\medskip

We provided in this paper state complexity lower-bounds of the form $\Omega(\log\log(n)^h)$ for any $h<\frac{1}{2}$ for protocols stably computing the counting predicates when the number of leaders and the interaction-width are bounded. This lower-bound almost matches the upper-bound $O(\log\log(n))$ introduced in~\cite{DBLP:conf/stacs/BlondinEJ18} by Blondin, Esparza, and Jaax. We left as open the exact asymptotic state complexity.

\medskip

Notice that for leaderless protocols, the state complexity is still open since there is an exponential gap between the upper-bound $O(\log(n))$ given in~\cite{DBLP:conf/stacs/BlondinEJ18} and the lower-bound introduced in this paper.

\begin{acks}
The author is supported by the grant ANR-17-CE40-0028 of the French National Research Agency ANR (project BRAVAS)
\end{acks}
\bibliographystyle{ACM-Reference-Format}
\bibliography{main}

\end{document}